\tikzstyle{vertex}=[circle,fill=blue!25,minimum size=15pt,inner sep=0pt]
\tikzstyle{vertex_clean}=[circle,fill=black,minimum size=7pt,inner sep=0pt]
\tikzstyle{connection} = [draw=black!80]
\tikzstyle{connection1} = [connection,rectangle,fill=red!25]
\tikzstyle{connection2} = [connection,regular polygon,regular polygon sides=5,fill=yellow!25]
\tikzstyle{connection3} = [connection,regular polygon,regular polygon sides=7,fill=green!25]
\tikzset{
	>=stealth',
	arc/.style={
		->,
		thick,
		shorten <=0.2pt,
		shorten >=0.2pt,}
}
\def\cl@chapter{}
\newtheorem{proposition}{Proposition}
\newtheorem{lemma}{Lemma}
\newtheorem{definition}{Definition}
\newtheorem{proof}{Proof}
\newtheorem{theorem}{Theorem}
\newcommand{\CommaPunct}{\mathpunct{\raisebox{0.1ex}{,}}}
\newcommand{\spacearg}[1]{\langle #1_{space}}
\newcommand{\timearg}[1]{#1_{time} \rangle}
\def\spacetime#1#2{$\spacearg{#1}$~$\CommaPunct$ $\timearg{#2}$}
\def\spacetimerange#1#2#3{\spacetime{#1}{#2}$|_{\forall #3}$}
\def\lengthstretch#1{($#1$)-stretch}
\newcommand{\lglg}[1]{\lg\lg{#1}}
\newcommand{\set}[1]{\left\{ {#1} \right\}}
\newcommand{\eps}{\epsilon}
\renewcommand{\l}{\ell}
\newcommand{\colored}[1]{\textcolor{RawSienna}{#1}}
\newcommand{\directedXrQ}{\colored{\overset{\longrightarrow}{X_r^Q}}}
\begin{document}

\title{Efficient Vertex-Label Distance Oracles for Planar Graphs
	\thanks{This work was partially supported by Israel Science
		Foundation grants 794/13 and 592/17, and by the Israeli ministry
                of absorption.}
\thanks{An extended abstract of this work was presented in the 13th
  Workshop on Approximation and Online Algorithms (WAOA 2015), held in
Patras, Greece, 17-18 September 2015.}}

\author{Shay Mozes \and Eyal E. Skop}


\maketitle
\begin{abstract}

We consider distance queries in vertex-labeled planar graphs. For any
fixed $0 < \eps \leq 1/2$ we show how to preprocess a directed planar graph
with vertex labels and arc lengths into a data structure that answers queries of the following form. Given a
vertex $u$ and a label $\lambda$ return a $(1+\eps)$-approximation
of the distance from $u$ to its closest vertex with label $\lambda$.
For a directed planar graph with $n$ vertices, such that the ratio of the largest to smallest arc length is bounded
by $N$,  the preprocessing time is
$O(\eps^{-2}n\lg^{3}{n}\lg(nN))$, the data structure size is
$O(\eps^{-1}n\lg{n}\lg(nN))$, and the query time is
$O(\lglg{n}\lglg(nN) + \eps^{-1})$. 
We also point out that a vertex label distance oracle for undirected
planar graphs suggested in an earlier version of this paper is incorrect.

\end{abstract}

\section{Introduction}

Imagine you are driving your car and suddenly see you are about to run
out of 
gas. What should you do?
Obviously, you should find the closest gas station.
This is the {\em vertex-label distance query problem}. 
Various software applications like Waze and Google Maps attempt to provide such a
functionality. The idea is to preprocess the locations of service
providers, such as gas
stations, hospitals, pubs and metro stations in advance, so that when a user, whose location is not known a priori, asks for the
distance to the closest service provider, the information can be retrieved
as quickly as possible. 
A dual situation is, for example, when a taxi company wants to dispatch a taxi 
from the station closest to the location where the taxi is required.
Clearly, this problem can be solved using a {\em vertex-label}
distance oracle by transposing the original graph.

We study this problem from a theoretical point of
view. We model the network as a planar graph with labeled vertices
(e.g., a vertex labeled as a gas station). We study distance oracles
for such graphs. A  {\em vertex-label distance oracle} is a data
structure that represents the input graph and can be queried for the
distance between any vertex and the closest vertex with a desired label. 
We consider approximate distance oracles, which, for any given fixed
parameter $\eps>0$, return a distance estimate that is at least the true
distance queried, and at most $(1+\eps)$ times the true distance (this
is known as a $(1+\eps)$-stretch).
One would like an oracle with the following properties:
queries should be answered quickly, the oracle should consume little space, and
the construction of the oracle should take as little time as
possible. 
We use the notation \spacetime{O(S(n))}{O(T(n))} to express the dependency of the space
requirement and query time of a distance oracle on $n$, the number of vertices in the graph.

\paragraph{\bf Our results and approach}


For directed planar graphs we give a \lengthstretch{1+\eps} \spacetime{O(\eps^{-1}n\lg{n}\lg(nN))}{O(\lglg{n}\lglg(nN)+\eps^{-1})}
vertex-label distance oracle whose construction time is
$O(\eps^{-2}n\lg^{3}{n}\lg(nN))$. Throughout the paper, $N$ is the
ratio of the largest to smallest arc length. To the best of our knowledge, no
non-trivial directed vertex-label distance oracles were proposed
prior to the current work.


Consider a vertex-vertex distance oracle for a graph with label set $L$.
If the oracle works for general directed graphs 
then the vertex-label problem can be solved easily;
add a distinct apex $v_\lambda$ for each label $\lambda \in L$, and
connect every  $\lambda$-labeled vertex to $v_\lambda$ with a zero
length arc. Finding the distance from a vertex $u$ to label $\lambda$
is now equivalent to finding the distance between $u$ and $v_\lambda$.
The main difficulty in applying this approach to
oracle for directed planar graphs is that adding
apices breaks planarity. In particular, it affects the
separability of the graph. Thus, the reduction does not work with oracles that depend on planarity
or on the existence of separators.

Our contribution is in realizing and showing that the internal workings
of vertex-vertex distances oracles for planar graphs due to
Thorup~\cite{Thorup} can be extended to support vertex
labels. Achieving this modification is non-trivial since introducing
the apices needs to be done in a manner that guarantees correctness
without compromising efficiency.
Thorup's oracles rely on the existence of fundamental cycle separators in
planar graphs, a property that breaks when apices are added to the
graph. We observe, however, that once the graph is separated, Thorup's oracle does not
depend on planarity. We therefore postpone the addition of the apices
till a later stage in the construction of the distance oracle, when
the graph has already been separated. We show
that, nonetheless, approximate distances from any vertex to any label
in the entire graph can be efficiently approximated.


An earlier version of this paper claimed a simplified and more
efficient vertex label distance oracle for undirected planar
graphs. This claim turns out to be incorrect, as we briefly explain
in Section~\ref{sec:undirected}.

\section{Related Work}
We summarize related work on approximate 
distance oracles.
For general graphs, 
no \lengthstretch{2} approximate vertex-vertex distance
oracles with nearly-linear space consumption are likely to exist~\cite{PatrascuR14}.
For any integer $k\geq 2$, Thorup and Zwick \cite{TZ} presented a \lengthstretch{2k-1} 
\spacetime{O(kn^{1+1/k})}{O(k)} distance oracle for undirected graphs
whose construction time is
 $O(kmn^{1/k})$.
Wulff-Nilsen \cite{WN} achieved the same result with 
preprocessing of $O(kn^{1+\tfrac{c}{k}})$ for a universal constant $c$.
Several more improvements of \cite{TZ} have been obtained for unweighted
or sparse graphs (\cite{Baswana1}, \cite{Baswana2}, \cite{Baswana3}).
The current state of the art is due to Chechik \cite{Chechik15}. She
presented a \lengthstretch{2k-1} 
\spacetime{O(n^{1+1/k})}{O(1)} distance oracle 
construction algorithm for undirected graphs
with $O(n^2+m\sqrt{n})$ construction time.

In contrast, vertex-vertex oracles for planar graphs with stretch less
then $2$ have been constructed.
\sloppy
Thorup \cite{Thorup} gave a \spacetime{O(\eps^{-1}n\lg{n}\lg(nN))}{O(\lglg(nN)+~\eps^{-1})} 
\lengthstretch{1+\eps} distance oracle for directed planar graphs, and a \lengthstretch{1+\eps} \spacetime{O(\eps^{-1}n\lg{n})}{O(\eps^{-1})} 
(simplified) distance oracle for undirected planar graphs.

Our result is based on
Thorup's directed oracle, which is described in~\cref{sec:thorup_dist}. 
Klein \cite{Klein02} independently gave a distance oracle for
undirected planar graphs with the same bounds.
Kawarabayashi, Klein and Sommer \cite{KKS} have shown a \spacetime{O(n)}{O(\eps^{-2}\lg^{2}(n))} 
undirected \lengthstretch{1+\eps} distance oracle constructed in $O(n\lg^{2}{n})$ time, inspired by \cite{Thorup}.
They give a trade-off of \spacetimerange{O(\tfrac{\eps^{-1}n\lg{n}}{\sqrt{r}})}{O(r+\sqrt{r}\eps^{-1}\lg{n})}{r \leq n} oracle algorithms.
Sommer et al.~\cite{Sommer} have shown better tradeoffs for oracles for undirected planar graphs.
For the case where $N\in poly(n)$, they achieve  \lengthstretch{1+\eps} \spacetime{O^{*}(n\lg{n})}{O^{*}(\eps^{-1})} oracle, where $O^{*}$ hides $\lg(\eps^{-1})$ and $\lg^*n$ factors.

The vertex-label distance query problem was introduced by Hermelin, Levy, Weimann and Yuster \cite{Hermelin}.
For any integer $k\geq 2$, they gave a \lengthstretch{4k-5} \spacetime{O(kn^{1+1/k})}{O(k)}
vertex-label distance  oracle (expected space) for undirected general
(i.e., non-planar)
graphs. This is not efficient when the number $l$ of distinct labels is $o(n^{1/k})$ (since the trivial solution of storing all pairwise vertex-label distances is better in that case).
They also presented a \lengthstretch{2^{k}-1} \spacetime{O(knl^{1/k})}{O(k)} oracle for undirected graphs.
Chechik \cite{Chechik} improved the latter result.
In the same paper she also presented a \lengthstretch{4k-5} \spacetime{O(knl^{1/k})}{O(k)} (expected space) oracle for undirected graph.

For undirected planar graphs, Li, Ma and Ning \cite{LMN}, building
on~\cite{Klein02}, constructed a \lengthstretch{1~+~\eps} vertex-labeled oracle with 
\spacetime{O(\eps^{-1}n\lg{n})}{O(\eps^{-1}\lg{n}\lg\Delta)}
bounds. Here, $\Delta$ is the (hop) diameter of the graph,
which can be $\theta(n)$. It is also shown in~\cite{LMN} how to avoid
the $\lg\Delta$ factor when $\Delta=O(\lg{n})$.
The construction time of their oracle is $O(\eps^{-1}n\lg^2n)$. 

\L\k{a}cki,  O\'cwieja,
Pilipczuk,  Sankowski, and  Zych~\cite{LackiOPSZ15} developed dynamic
vertex-labeled distance oracles for undirected general and planar
graphs, and used them to maintain approximate solutions for dynamic
Steiner and subgraph TSP problems. They describe a generic scheme
for converting certain distance oracles for undirected planar graphs into dynamic vertex-label distance oracle.
Applying their scheme to one of the slower variants of Thorup's
distance oracles, they obtained a \spacetime{O(\eps^{-1}n \lg n \lg(nN))}{O(\eps^{-1}\lg
	n\lg(nN))}  \lengthstretch{1+\eps} undirected vertex-labeled distance
oracle that also supports merging labels. 

\fussy

Another related work is by Abraham, Chechik,
Krauthgamer and Wieder \cite{ACKW15}, who considered approximate nearest neighbor
search in planar graph metrics. This is the special case of
vertex-labeled distance oracle with only one label. 
For this easier problem they obtained a data structure whose size is
nearly linear in the number of labeled vertices. However, they assume
an {\em exact} vertex-vertex distance oracle is provided.

To the best of our knowledge, no
non-trivial directed vertex-label distance oracles were proposed
prior to the current work.

\section{Preliminaries} 
In this paper we only deal with connected graphs. This is without loss of generality since each weakly connected component can be handled separately. 
For a graph $G$, we denote by $V(G)$ and $E(G)$ the set of vertices and arcs of $G$, respectively. 
Throughout the paper, all graphs are {\em directed} unless stated otherwise. 
We write $e=uv$ to denote an arc from vertex $u$ to vertex $v$. 
We use the term edges when dealing with undirected graphs, 
or when we wish to ignore the orientation of arcs in directed graphs.

Let $\delta: E(G)
\rightarrow \mathbb{R}^+$ be a function assigning lengths to the arcs. We assume that the lengths are normalized so
that the smallest arc length is 1.
The {\em length} of a path
is the sum of lengths of its arcs.
For $u,v \in V(G)$, the {\em distance} between $u$ and $v$ in $G$, denoted $\delta_{G}(u,v)$, 
is the length of a shortest $u$-to-$v$ path in $G$.
We denote by $N$ the maximum length of an arc in $G$. Thus
$\delta_G(\cdot, \cdot) = O(nN)$.
We assume there are no parallel arcs since it suffices to keep just the arc with minimum length within each set of parallel arcs.

A path or a cycle $P$ is {\em simple} if each vertex is the endpoint of at most two
arcs of $P$.
The concatenation of two paths $P_1$ and $P_2$, where the last vertex of $P_1$
is the first vertex of $P_2$, is denoted $P_1 \circ P_2$.
Two paths $P$ and $Q$ {\em intersect} if $V(P)\cap V(Q) \neq \emptyset$.
For a simple path $Q$ and a vertex set $U \subseteq V(Q)$ with $|U|
\geq 2$, we define $\bar
Q$, the {\em reduction} of $Q$ to $U$ as a path whose vertices are
$U$. Consider the vertices of $U$ in the order in which they are
visited by $Q$. For every two consecutive vertices $u_1,u_2$ of $U$ in this
order, there is an arc $u_1u_2$ in $\bar Q$ whose length is the
length of the $u_1$-to-$u_2$ subpath of $Q$.

Let $L=\set{\lambda_{i}}_{i=1}^{l}$ be a set
of $l$ labels.
A vertex-labeled graph is a graph $G$ equipped with a function
$f:V(G)\rightarrow L$. We define $V_\lambda=\set{v\in V(G)|
  f(v)=\lambda}$ to be the set of vertices with label $\lambda$.
For a vertex-labeled $G$ and $\lambda \in L$, we define
$\delta_{G}(u,\lambda)=\underset{w\in V_\lambda}{\min}\delta_G(u,w)$
to be the distance in $G$ from $u$ to the closest $\lambda$-labeled vertex.

A {\em vertex-label distance oracle} is a data structure that, 
for a specific vertex-labeled graph $G$, given a
vertex $v\in V(G)$ and a label $\lambda \in L$, outputs
(an approximation of)  $\delta_{G}(v,\lambda)$.
We note that this problem is a generalization of the basic distance
oracle problem in which each vertex is given a unique label. 
Constructing an $O(nl)$-space vertex-label distance oracle is trivial,
by precomputing and storing the distance between each vertex and
each possible label. The goal is, therefore, to devise an oracle which
requires $o(nl)$ space, while allowing fast queries.

We assume the reader is familiar with basic definitions and properties of planar graphs, such as the definitions of planar embeddings, and planar duality.
Let $G$ be a planar embedded graph. Each vertex of $G$ is embedded to a point in the plane, and each arc is embedded as a curve in the plane between the images of its endpoints, such that the images of distinct arcs are internally disjoint.
The faces of $G$ are the maximally connected regions of the plane  after removing the image of the vertices and arcs of $G$. Each face is identified with the set of arcs and vertices on its boundary.
We denote by $G^*$ the planar dual of $G$. 
The vertices of $G^*$ are the faces of $G$, 
and the arcs of $G^*$ are in one-to-one correspondence with the arcs of $G$. We can therefore refer to the same arc $e$ in both the primal $G$ and the dual $G^*$.
For a graph $G$, we define the size of the graph as $|G|=|V(G)|+|E(G)|$. 
An immediate consequence of Euler's formula is that in a planar graph $G$ 
where each face of $G$ is of size at least 3 
(i.e. there are no parallel edges and no self loops),  $|E(G)| = O(|V(G)|)$. 
The number of faces is also $O(|E(G)|)$.
Therefore, for planar graph $G$, $|G|=O(|V(G)|)=O(n)$.

An undirected cycle $C$ in a directed graph $G$ 
is a set of arcs in $G$ that, when regarded as undirected
edges, form a cycle.
Let $G$ be a directed planar embedded graph, and let $f_\infty$ be the infinite face of $G$.
Let $C$ be a simple undirected cycle in $G$.
The cycle $C$ partitions the plane into two regions. 
A face $f$ is {\em enclosed} by $C$ if it belongs to the part of the partition that does not contain the infinite face.
A vertex $u \in V(G)$ is enclosed by $C$ if it is incident to a face enclosed by $C$. It is {\em strictly enclosed} if, in addition, $u$ is not a vertex of $C$.
An arc $e\in E(G)$ is strictly enclosed by $C$ if 
both faces incident to $e$ are enclosed by $C$. 
An arc $e\in E(G)$ is enclosed by $C$ if $e\in E(C)$ or if $e$ is strictly enclosed by $C$.
The {\em interior} of $C$ is the subgraph induced on $G$ by the edges enclosed by $C$. 
The {\em exterior} of $C$ is the subgraph induced on $G$ by the edges not strictly enclosed by $C$. 
Note that $C$ belongs to both the interior and exterior of $G$. 
The strict interior and strict exterior of $G$ are, respectively, 
the interior and exterior of $G$ without the vertices and edges of $C$.

The following is a specialized statement of the Jordan curve theorem for planar embedded graphs. 
\begin{proposition}\label{prop:Jordan}
Let $G$ be a planar graph. Let $C$ be an undirected simple cycle in $G$. Any path in $G$ between a vertex in the interior of $C$ and a vertex in the exterior of $C$ contains a vertex of $C$.
\end{proposition}

Let $T$ be a rooted spanning tree of an undirected graph $G$.
For $u\in V(G)$, let $T[u]$ denote the unique root-to-$u$ path in $T$. We call $T[u]$ a branch of $T$.
Let $u_1,u_2 \in V(G)$ be two vertices such that $e=u_1u_2 \in E(G) \setminus E(T)$.
The {\em fundamental cycle} of $e$ (with respect to $T$) is the
undirected cycle composed of $E(T[u_1]), e$,
and $E(T[u_2])$. Note that a fundamental cycle might not be simple since $T[u_1]$ and $T[u_2]$ might have a common prefix. In general, a fundamental cycle can be decomposed into a simple prefix path and a simple cycle. We extend the notions of enclosure defined above for simple cycles to fundamental cycles. 
A face $f$ is enclosed by $C$ if $f$ is enclosed by the simple cycle portion of $C$. The remaining definitions are as in the simple cycle case.

\begin{proposition}[\cite{vonStaudt}] 
For any spanning tree $T$ of $G$, 
the set of edges of $G$ not in $T$ form a spanning tree of $G^*$.
\end{proposition}
For a spanning tree $T$ of $G$, 
we use $T^*$ to denote the spanning tree of $G^*$ 
consisting of the edges not in $T$. 

An undirected fundamental cycle $C$ is a {\em fundamental cycle separator} if each of the  interior and exterior of $C$, consists of at most $3/4$ of the faces of $G$.\footnote{The definitions in the literature differ in the choice of constant ($3/4$ in our case) as well as in the quantity according to which the balance of the  separation is defined (number of faces in our case).}  
Lipton and Tarjan~\cite{LiptonTarjan} show that, 
if $G$ is triangulated 
(every face is adjacent to at most~3 vertices) 
then for any spanning tree $T$ of $G$, 
there is an edge not in $T$ whose fundamental cycle with respect to $T$ 
is a fundamental cycle separator.
Goodrich~\cite{Goodrich} observed that 
such an edge can be found by looking for an 
edge-separator in the dual tree $T^*$.
Thorup~\cite{Thorup} used a different construction that separates a graph into three balanced subgraphs. Our discussion of fundamental cycle separators differs from Thorup's, and follows that of~\cite{KMS13}.
This leads to a simpler structure of the oracle, as we discuss in Section~\ref{sec:thorup_dist}.
 
Let $G$ be a planar graph. For the discussion of separators we ignore the directions of arcs in $G$ and treat it as an undirected graph (the direction of arcs does not affect separation properties of the graph). 
The following basic and simple lemma was proved in~\cite{KMS13}.
\begin{lemma}\cite[Lemma 1]{KMS13}\label{lem:fundamental-cycle-separator} 
Let $G$ be a planar graph with face weights such that no face has weight more than $1/4$ the total weight. 
Let $T$ be a spanning tree of $G$. 
Let $T^*$ be the spanning tree of the planar dual $G^*$ of $G$ consisting of the edges not in $T$.
Assume that $T^*$ has maximum degree~3. One can find in $O(|V(G)|)$ time an edge $\hat e$ of $T^*$ such that 
the total face weight enclosed by the fundamental cycle of $\hat e$ with respect to $T$ is at least $1/4$ the total weight and at most $3/4$ the total weight.
\end{lemma}

Let $G$ be a triangulated planar graph with spanning tree $T$. 
Let $T^*$ be the spanning tree of $G^*$ consisting of the edges not in $T$. 
Since $G$ is triangulated, the maximum degree of $T^*$ is 3.
Let $\hat e$ be the edge in Lemma~\ref{lem:fundamental-cycle-separator}, 
and let $C$ be its fundamental cycle with respect to $T$. 
Separating $G$ with $C$ yields two subgraphs, 
the interior $G_{int}$, and exterior $G_{ext}$ of $C$. 
These subgraphs inherit the embedding from $G$. 
Thus, the cycle $C$ is the infinite face of $G_{int}$, 
and the simple cycle portion of $C$ is a face of $G_{ext}$. See \autoref{fig:fund} for an illustration.

One can obtain from $T$ subtrees that span $G_{ext}$ and $G_{int}$. 
The subgraph of $T$ enclosed by $C$ is a spanning tree $T_{int}$ of $G_{int}$, 
and the subgraph of $T$ not strictly enclosed 
by $C$ is a spanning tree $T_{ext}$ of $G_{ext}$. 
Similarly, Consider the two subtrees, $T^*_{ext}$ and $T^*_{int}$, 
obtained from $T^*$ by deleting the edge $\hat e$, and then adding $\hat e$ back to both subtrees.
The subtree $T^*_{ext}$ that contains the dual vertex corresponding to the infinite face $f_\infty$ of $G$ is a spanning tree of $G_{ext}^*$. The other subtree, $T^*_{int}$, is a spanning tree of $G_{int}^*$. 

We note that even though $G$ is triangulated, $G_{int}$ and $G_{ext}$ are not. For example, the face of $G_{int}$ whose boundary is $C$ might not be a triangle. However, since the spanning trees $T^*_{ext}$ and $T^*_{int}$ of $G^*$ are subtrees of $T^*$, their maximum degree is 3. Hence, one can continue to apply Lemma~\ref{lem:fundamental-cycle-separator} recursively 
even though $G_{int}$ and $G_{ext}$ are not triangulated. We will describe such a decomposition in Section~\ref{sec:thorup_dist}. 

\begin{figure}
\begin{center}
\begin{subfigure}
  {.31\textwidth}{\includegraphics[width=0.95\textwidth]{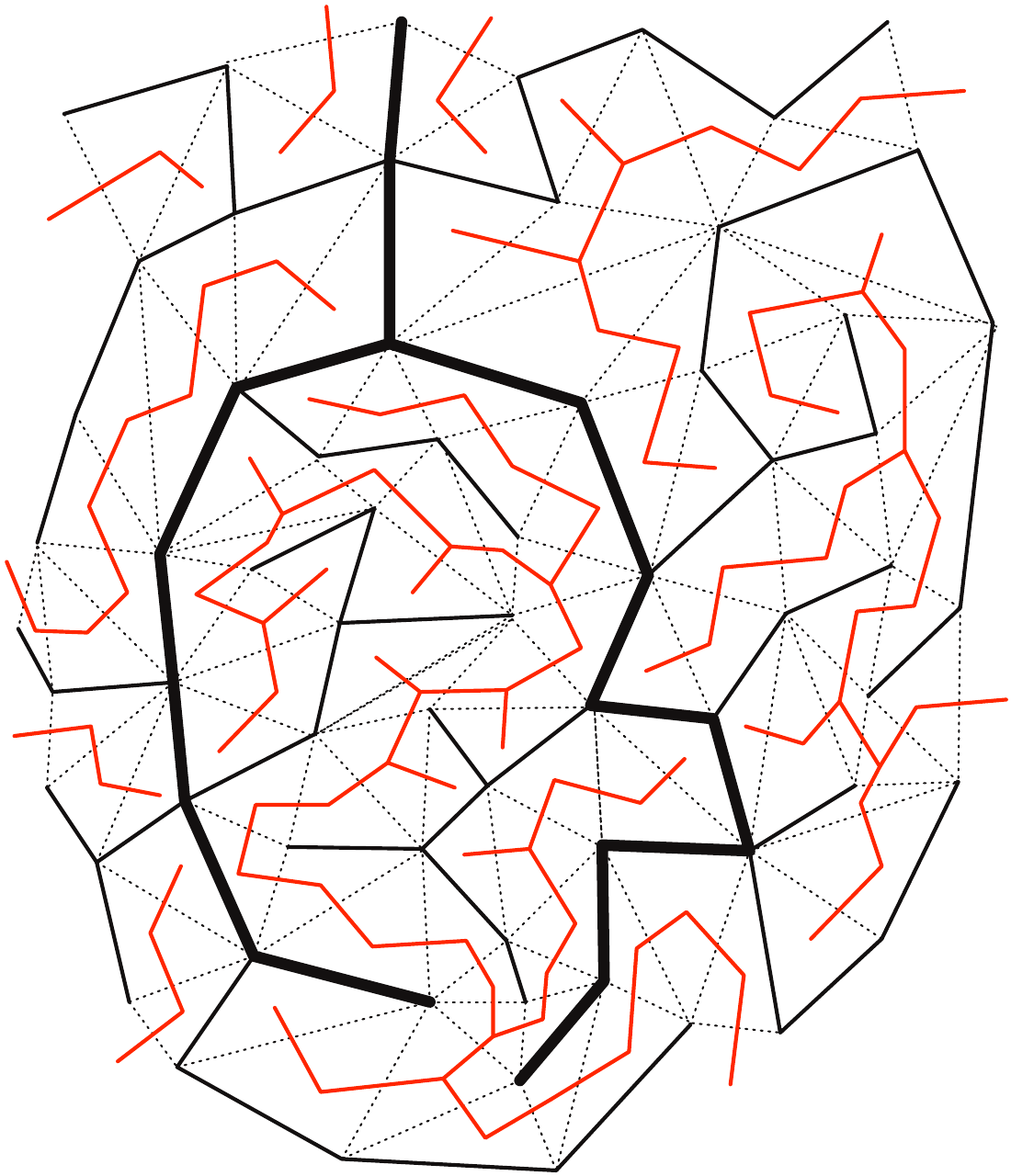}}
\end{subfigure}
\hspace{10pt}
\begin{subfigure}
  {.31\textwidth}\label{fig:fundext}{\includegraphics[width=0.95\textwidth]{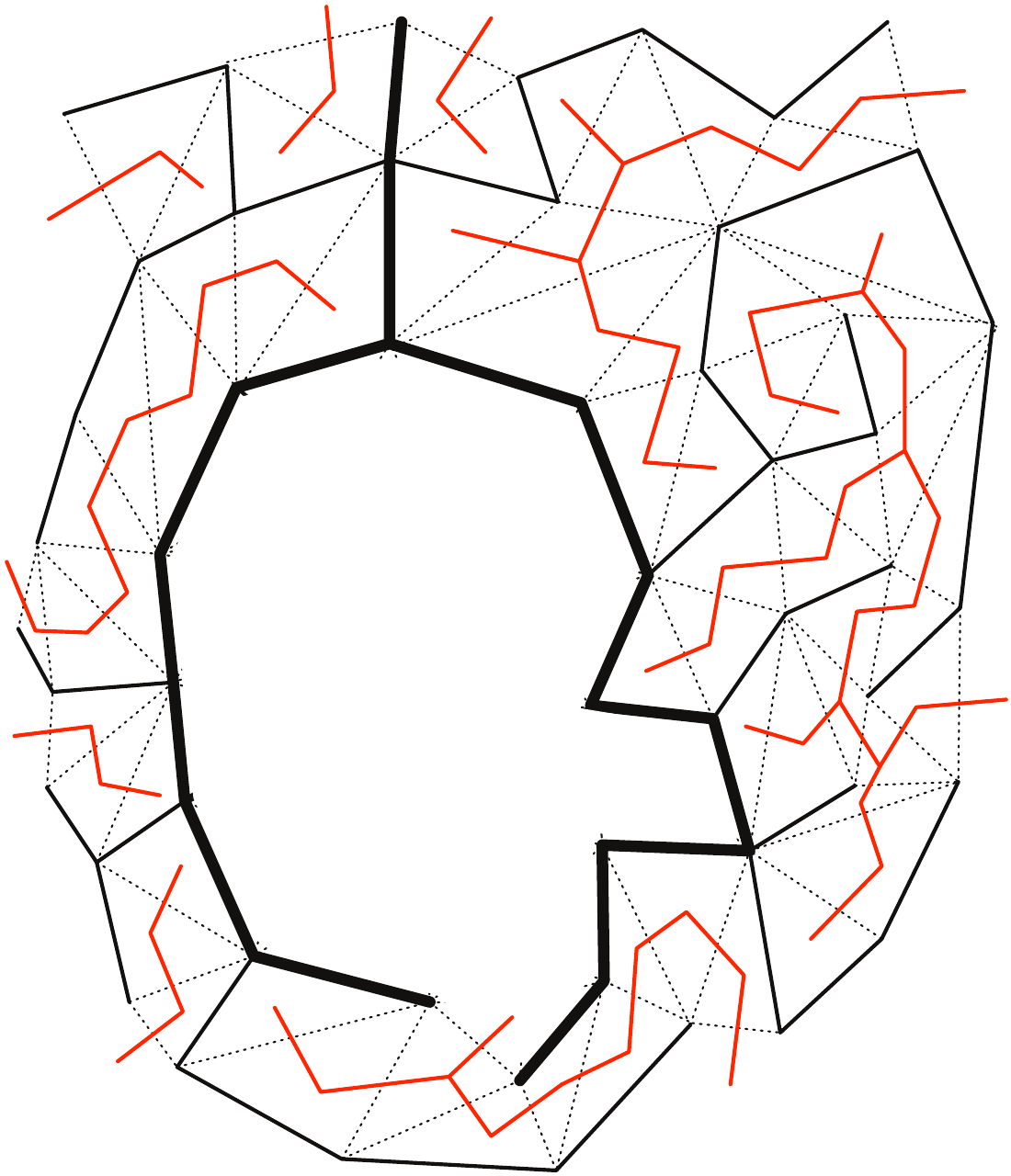}}
\end{subfigure}
\begin{subfigure}
  {.31\textwidth}\label{fig:fundint}{\includegraphics[width=0.95\textwidth]{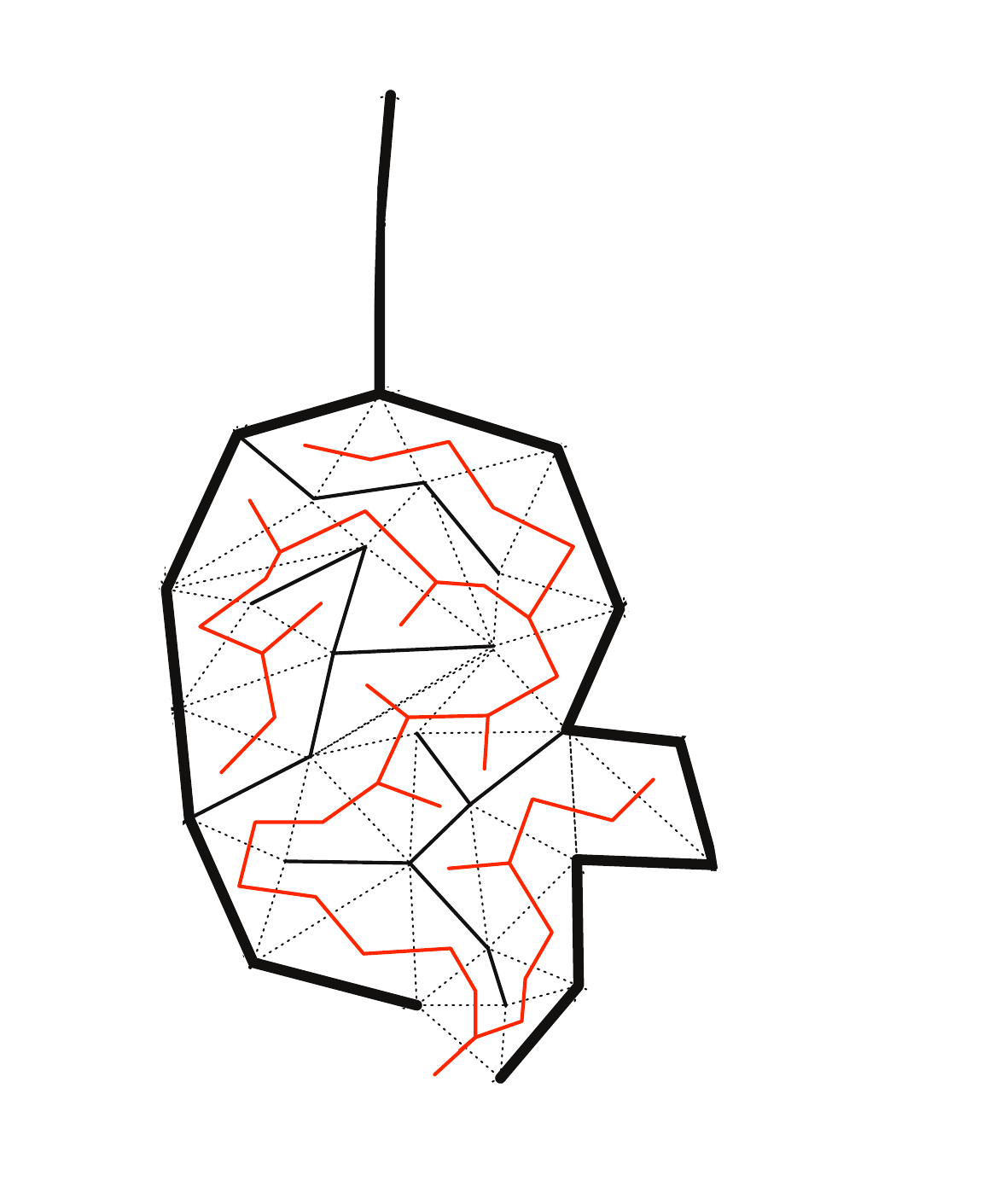}}
\end{subfigure}
\end{center}
\caption{Separating a graph using a fundamental cycle separator. 
	Left: part of a graph $G$ is shown.  Edges of $T$ are black. 
	Two branches of $T$ on a fundamental cycle are solid thick.  
	Edges of the dual tree $T^*$ are shown in red.
	Since only a part of $G$ is shown, this illustration does not show $T^*$ as a tree. Center: part of the subgraph $G_{ext}$ of $G$ is shown. Right: the subgraph $G_{int}$ is shown. Note that the spanning trees $T_{ext}$ and $T_{int}$ are subtrees of $T$, and that the dual spanning trees $T^*_{ext}$ and $T^*_{int}$ are subtrees of $T^*$. Observe that, e.g., the infinite face of $G_{int}$ is not triangulated, yet the maximum degree of $T^*_{int}$ remains 3. } \label{fig:fund}
\end{figure}

\section{Thorup's Approximate Distance Oracle~\cite{Thorup}\label{sec:thorup_dist}}

In this section we outline the distance oracle of
Thorup~\cite{Thorup}. This material is not new, but is necessary
for understanding our results. Our description is somewhat  different from
that of Thorup. It does not  go into all the details of Thorup's oracle. 
Rather, we focus on the aspects 
that our algorithm does not use as black boxes.

Thorup shows that the problem of 
constructing a distance oracle for a
directed graph can be reduced to constructing a
distance oracle for a restricted kind of graphs, 
defined in the following.

\begin{definition}
A set $T$ of arcs in a directed graph $H$ is an $\alpha$-layered spanning tree if it
satisfies the following properties:
\begin{itemize}
\item Regarding the arcs of $T$ as edges, $T$ forms a rooted spanning tree of $V(H)$.
\item Each branch of $T$ can be decomposed into no more than 
$3$ directed shortest paths in $H$, each of 
length at most $\alpha$. These paths may be of opposing directions. 
I.e., they need not all be directed away from the root of $T$.
\end{itemize}
\end{definition}

A graph $H$ is called $\alpha$-layered if it has an $\alpha$-layered spanning tree.
Thorup shows that any graph $G$ can be decomposed into 
$\alpha$-layered graphs of total linear size 
for any $\alpha\in \mathbb{R^+}$. The decomposition is such that any
shortest path of length at most $\alpha$ in G is represented in at
least one of a constant number of $\alpha$-layered graphs in the
decomposition (See \autoref{sec:VLOracle} for
more details). Thus, a natural scaling technique can be used to answer
distance queries in $G$ by answering distance queries in a few
$\alpha$-layered graphs. 
The concept of $\alpha$-layered is important because fundamental cycles of an
$\alpha$-layered spanning tree can be decomposed into a constant
number of directed shortest paths. This property is crucial in the design of
Thorup's oracle.

\begin{definition}
	A scale-$(\alpha,\eps ')$ distance oracle for a graph $H$
        is a data structure that, when queried for $\delta_{H}(v,w)$, returns
	$$d(v,w) \in
	\begin{cases}
	[\delta_{H}(v,w), \delta_{H}(v,w) + \eps '\alpha]  &
        \textrm{if $\delta_{H}(v,w)\leq \alpha$} \\
	[\delta_{H}(v,w), \infty] & \textrm{otherwise} \\
	\end{cases}$$
	
\end{definition}

Thorup shows how to construct a
distance oracle for any graph $G$  using scale-$(\alpha,\eps ')$ distance
oracles for minors of $G$ at several scales.  
This is summarized in the following lemma
(See \autoref{sec:VLOracle} for more details).


\begin{lemma} (\cite[Sections 3.1,3.2,3.3]{Thorup})
	\label{lem:red_scale}
        Let $G$ be a graph. Suppose that,  
        for any $\alpha, \eps' \in \mathbb{R^+}$ and
        any $\alpha$-layered minor $H$ of $G$, one can construct,  in $O(p(|H|,\eps'))$ time, a
        scale-$(\alpha,\eps ')$ distance oracle with space bound
        $O(s(|H|, \eps'))$ and query time $O(t(\eps '))$ (here, $p,s$
        and $t$ are arbitrary functions that only depend on $|H|$ and $\eps'$, not on $\alpha$). 
        Then, one can construct, for any $\eps \in \mathbb{R^+}$,
	a \lengthstretch{1+\eps} \spacetime{O(s(|G|,
          \tfrac{\eps}{4})\lg(|G|N))}{O(t(\tfrac{1}{4})\lglg(|G|N) +
          t(\tfrac{\eps}{4}))} distance oracle for $G$ in $O(p(|G|,\tfrac{\eps}{4})\lg(|G|N))$ time.
\end{lemma}


Planar graphs are closed under taking minors. Thus, by
Lemma~\ref{lem:red_scale}, 
to show a distance oracle for planar graphs, one only needs to show how to construct
scale-$(\alpha,\eps)$ distance oracles for $\alpha$-layered
planar graphs. We next describe Thorup's construction of such oracles.
In Section~\ref{subsec:thorup_preprocess} we explain the
recursive structure. In Section~\ref{subsec:thorup_final_construction}
we describe a non-efficient construction, and in
Section~\ref{subsec:thorup_efficient} we describe how to make the
construction efficient. We note again that all of these constructions are essentially due to Thorup.

\subsection{The recursive decomposition\label{subsec:thorup_preprocess}}

Let $G$ be a directed $\alpha$-layered planar graph 
with an $\alpha$-layered spanning tree $T$. 
Thus, each branch in $T$ can be decomposed into 
at most 3 directed shortest paths. 
We assume that $G$ is triangulated. This is without loss of generality since one can triangulate $G$ with infinite length bidirected arcs. Clearly, this does not affect the shortest paths or the distances in $G$.
For the description of the recursive decomposition we ignore 
the directions of arcs of $G$ and treat it as an undirected graph. 
We stress that this is done only to define the decomposition. 
When describing the oracle we will, of course, 
take the directions of arcs into consideration.

The set of edges not in $T$ forms a spanning tree $T^*$ of $G^*$, and, 
since $G$ is triangulated, the maximum degree of $G^*$ is at most 3. 
We decompose $G$ recursively using the fundamental cycle separator in
\autoref{lem:fundamental-cycle-separator} until each 
subgraph contains a constant number of faces of $G$.
The decomposition can be represented by a binary tree $\mathcal T$ in the following manner. We refer
	to the vertices of $\mathcal T$ as {\em nodes} to distinguish them
	from the vertices of $G$.
	See \autoref{fig:RGD} for an  illustration.

\begin{itemize}
	\item Each node $r$ of $\mathcal T$ is associated with a
	subgraph $G_r$ of $G$. 
	The subgraph associated with the root of $\mathcal T$ is $G$ itself. 
	The spanning tree of $G$ is $T$ and the spanning tree of $G^*$ is $T^*$
	\item Each non-leaf node $r$ of $\mathcal T$ is associated with the fundamental cycle separator $Sep_r$ found by 
	invoking \autoref{lem:fundamental-cycle-separator} on $G_r$. The weight assignment to the faces of $G_r$ used in the invocation of \autoref{lem:fundamental-cycle-separator}  assigns weight $1$ to each face of $G_r$ that is also a face of the original graph $G$ (we call these faces {\em original} faces), and weight $0$ to all other faces of $G_r$ (these faces are called {\em holes}).
	\item Each non-leaf node $r$ has two children $r_0,r_1$.
	The subgraph $G_{r_0}$ associated with the node $r_0$ is the exterior of $Sep_r$ in $G_r$. 
	The subgraph $G_{r_1}$ associated with the node $r_1$ is the interior of $Sep_r$ in $G_r$.

	In both $G_{r_0}$ and $G_{r_1}$ we replace each of the two branches comprising $Sep_r$ 
	by their reduction to their vertices incident to at least one original face.	
	
	As explained in the text following \autoref{lem:fundamental-cycle-separator}, 
	the spanning trees of $G_{r_0}$ and $G_{r_1}$ are subtrees of $T_r$ (with the branches of $Sep_r$ reduced as described above), 
	and the spanning trees of $G^*_{r_0}$ and $G^*_{r_1}$ are subtrees of $T_r^*$. 
	\end{itemize}

\begin{figure}
\begin{center}
\begin{subfigure}
  {.32\textwidth}\label{fig:tri}{\includegraphics[width=0.95\textwidth]{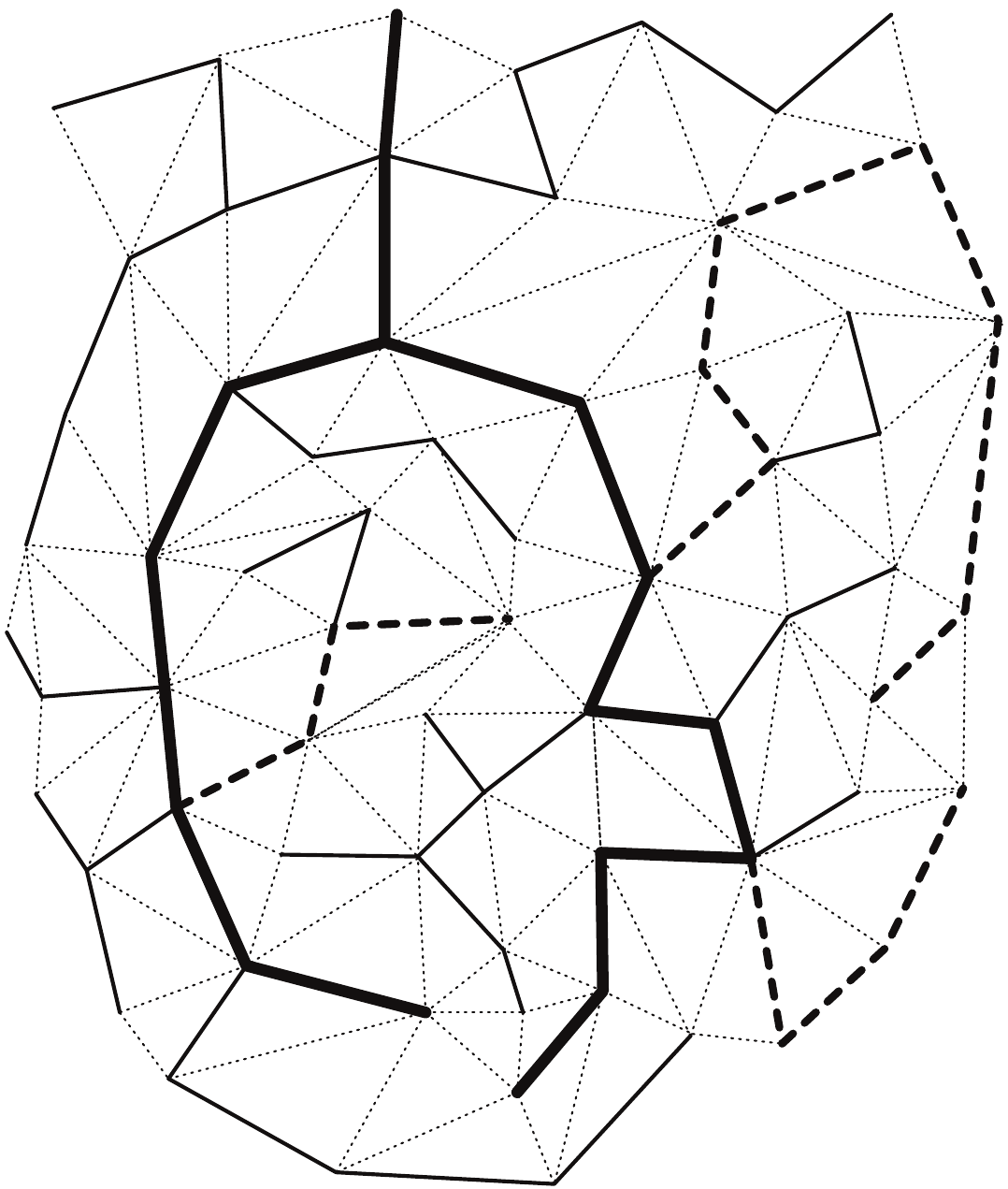}}
\end{subfigure}
\begin{subfigure}
  {.32\textwidth}\label{fig:tri0}{\includegraphics[width=0.95\textwidth]{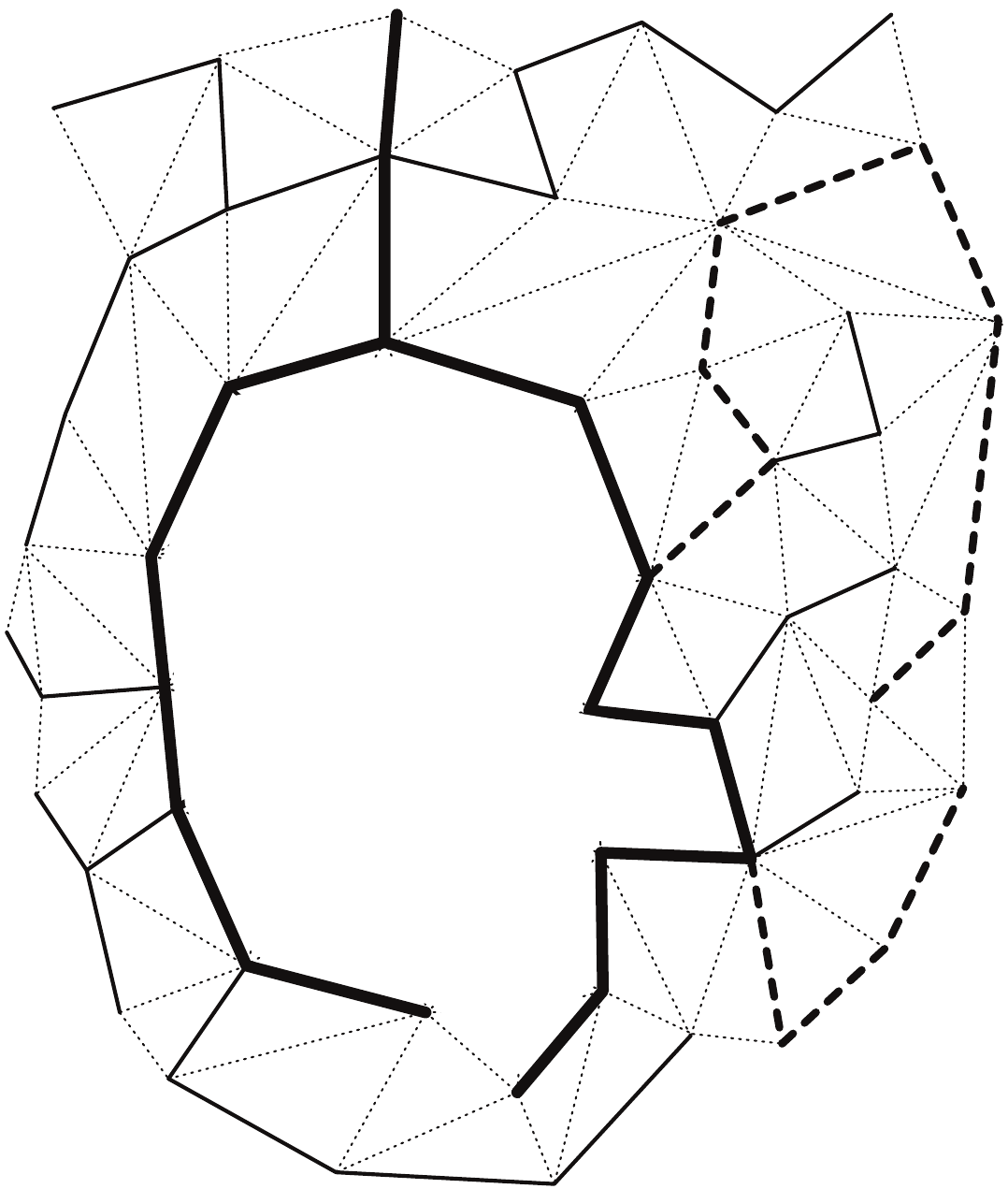}}
\end{subfigure}
\begin{subfigure}
  {.32\textwidth}\label{fig:tri00}{\includegraphics[width=0.95\textwidth]{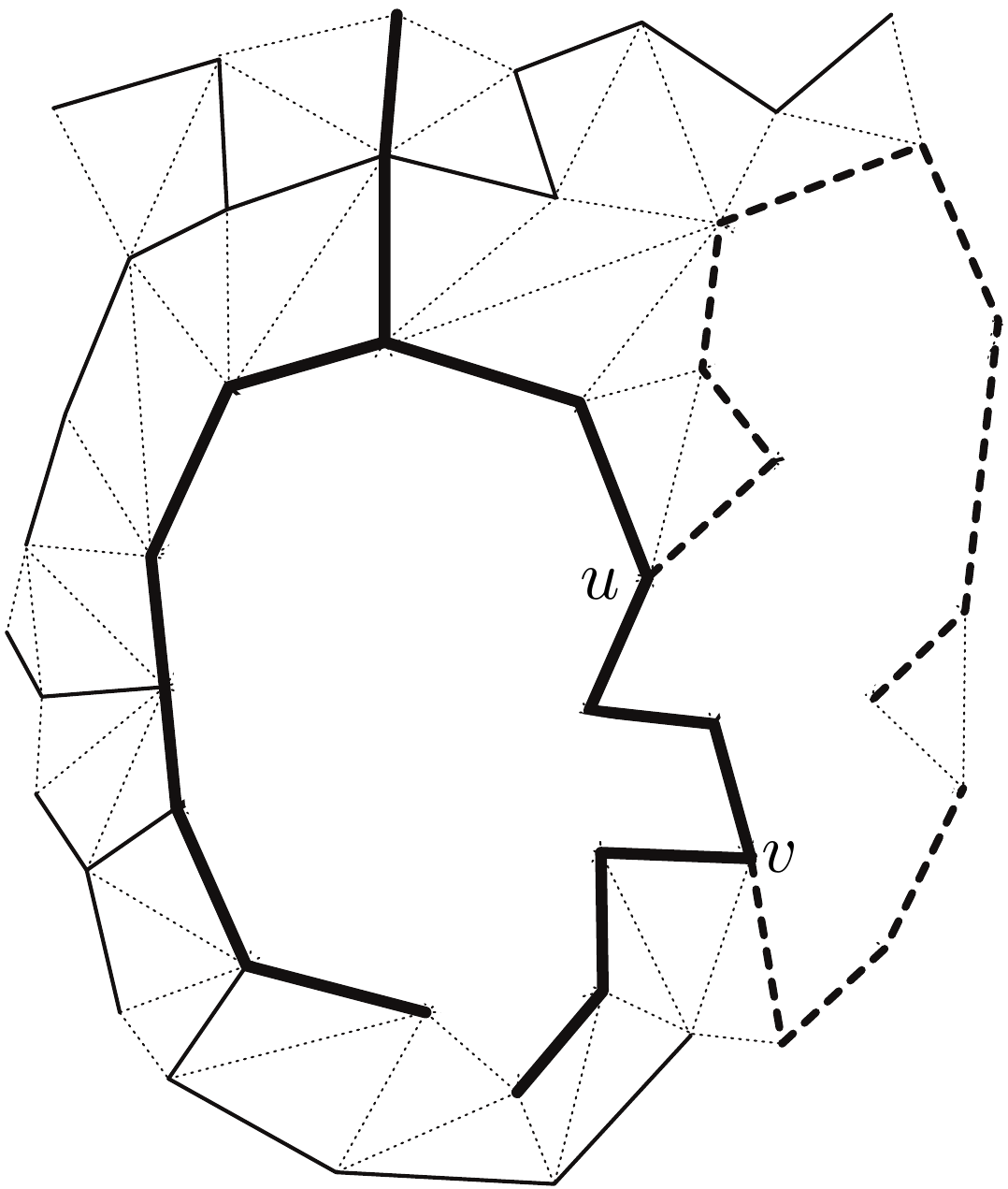}}
\end{subfigure}
\end{center}
\caption{Illustration of the recursive decomposition. 
	Edges of $T$ are black. 
	Edges of the cycle separator at the first level are solid thick, 
	edges of the cycle separators at the second level are dashed. Left: part of the graph $G$ is shown. Center: part of the subgraph $G_0$ of $G$. $G_0$ has a single hole. Right: part of the subgraph $G_{00}$ of $G_0$. $G_{00}$ has two holes. The subpath of $T$ between vertices $u$ and $v$ is reduced into a single edge because internal vertices on this subpath are not incident to original faces of $G$.} \label{fig:RGD}
\end{figure}

Since the number of original faces decreases 
by a constant factor at each level, 
the depth of $\mathcal T$ is $O(\lg n)$.
We will bound the space required to store 
the decomposition tree $\mathcal T$, 
and all the subgraphs $G_r$ generated by the decomposition by $O(|G| \lg|G|)$. This is where reducing the branches of $Sep_r$ is crucial, because the sizes of 
the separators do not decrease along the recursion. We explain this issue in detail.
Recall that the subgraphs created during the recursive decomposition 
have two type of faces; faces that are also original faces of $G$, 
which we call {\em original} faces, 
and faces that are not faces on $G$, which are called {\em holes}. 
The original faces are all triangles, and their number decreases 
by a constant factor at each recursive step. 
The holes are faces that may consist of more than 3 vertices, 
and their number increases by at most one at each recursive step. 
The problem with bounding the size of the subgraphs $G_r$ 
arises from vertices that are adjacent only to holes, 
because the number of such vertices does not necessarily 
decrease along the recursion. Note that such vertices are always vertices of the separator of some ancestor of $r$ in $\mathcal T$ (possibly $r$ itself). 
For this reason we replace the branches of $Sep_r$ 
with their reductions to vertices incident to at least one original face. 
This has the effect of replacing every (possibly long) maximal path 
of edges that are only incident to holes 
with a single reduced edge (that is assigned the length of the path). 
This change reduces the size of the subgraph without changing the distances in the subgraph.
After reducing these paths each vertex in the subgraph is incident to at least one original face. Therefore, the number of vertices in a subgraph 
is at most 3 times the number of original faces in the subgraph.
Namely, for any $r \in \mathcal T$, $|G_r|$ is within a constant multiplicative factor from the number of original faces in $G_r$. Since each original face belongs to exactly one $G_r$ 
at each level of $\mathcal T$, the total size of all subgraphs 
in a single level of the recursion is $O(|G|)$. Thus 
the total size is $O(|G| \lg |G|)$.

\subsection{A scale-$(\alpha,\eps)$ distance oracle\label{subsec:thorup_final_construction}}

The main idea in obtaining an approximate distance oracle is to store just a subset of the pairwise distances in
the graph, from which all approximate distances can be computed
efficiently.
Fix some $\eps > 0$. 
\begin{definition}[{$\eps$-covering connections set}]
Let $H$ be a graph. Let $Q$ be a shortest path in $H$ of
length at most $\alpha$. Let $v$ be a vertex of $H$. A set $C(v,Q) \subseteq V(Q)$ is called an {\em $\eps$-covering connections set} from $v$ to $Q$ with 
{\em connection lengths} $\l:$~$V(H)$~$\times$~$V(Q)$~$\rightarrow$~$\mathbb{R}^{+}$ if, for every $q^* \in V(Q)$, if 
$\delta(v,q^*) \leq \alpha$, then there exists some $q\in C(v,Q)$
s.t. $\delta_H(v,q^*) \leq \l(v,q) + \delta_H(q,q^*) \leq \delta_H(v,q^*) +\eps\alpha$.
\end{definition}
Intuitively one should think of a connection length $\ell(v,q)$ as
the true distance $\delta_H(v,q)$. However, as we explain later on, to
achieve efficient construction, $\ell(v,q)$ is sometimes an approximation of
$\delta_H(v,q)$.
An $\eps$-covering connections set $C(Q,v)$ from $Q$ to $v$ with 
connections lengths $\l:$~$V(Q)$~$\times$~$V(H)$~$\rightarrow$~$\mathbb{R}^{+}$ is defined symmetrically. 
If $\delta(q^*,v)\leq \alpha$,
then there exists some $q\in C(Q,v)$
s.t. $\delta(q^*,v)  \leq \delta(q^*,q) + \l(q,v) \leq \delta(q^*,v) + \eps\alpha$.
We use the term {\em $\eps$-covering connections
  set} (or just connections set) to refer to
both the $v$-to-$Q$ (or $Q$-to-$v$) connections as
well as their corresponding connection lengths.

Thorup showed that, given any graph $H$,
 shortest path $Q$ in $H$ of
length at most $\alpha$, and 
for every vertex
$v \in V(H)$ there exists an $\eps$-covering connections set of size $O(\eps^{-1})$.
The utility of $\eps$-covering connections sets is summarized in the following lemma (\cite[Lemma 3.5]{Thorup}):
\begin{lemma}\label{lem:conn_approx}
	Let $Q$ be a shortest path of length at most $\alpha$ in graph $H$.
	Let $P$ be a shortest $u$-to-$w$ path in $H$ of length at
        most $\alpha$ which intersects $Q$.
	For any $\eps > 0$, let $C(u,Q)$ and $C(Q,w)$ be $\eps$-covering connections 
	sets from $u$ to
        $Q$ and from $Q$ to $w$, respectively. Let $H_Q^{uw}$ be a graph with vertices $u,w$, 
	the vertices and arcs of the reduction of $Q$ to the
        connections of $u$ and of
	$w$, and with $u$-to-$Q$ and $Q$-to-$w$ arcs whose lengths are the corresponding connection lengths of $C(u,Q)$ and $C(Q,w)$.
Then 
		\begin{equation}
		\delta_{H_Q^{uw}}(u,w) \leq \delta_{H}(u,w) + 2\eps\alpha
		\end{equation}
\end{lemma}

A  {\em Lowest Common Ancestor} (LCA) data structure for a tree $T$
is a data structure that, given any two nodes $x,y$ of $T$, returns
the node furthest from the root that is an ancestor of both $x$ and $y$.
Harel and Tarjan~\cite{HarelTarjan} (and many other subsequent
simpler and practical results) show how to construct, in linear time, an LCA data structure of linear size and constant query time.

Let $u,w$ be vertices of $G$. Let $r_u, r_w$ be leaves of
$\mathcal T$ such that $u \in G_{r_u}$ and $w \in G_{r_w}$.
Let $r$ be the LCA of $r_u$ and $r_w$ in $\mathcal T$.
We denote by $S_r$ the set of six directed shortest paths 
in the decomposition of $Sep_r$ (recall that each branch in an $\alpha$-layered tree can be decomposed into at most 3 directed paths, and that $Sep_r$ is composed of two such branches). 
If $r_u\neq r_w$, $u$ and $w$ are separated by $Sep_r$,
and so, by \autoref{prop:Jordan}, every $u$-to-$w$ path in $G$ 
must intersect some path of $S_r$.
Therefore, by \autoref{lem:conn_approx}, to be able to
approximate $\delta_G(u,w)$, it suffices to keep, for every $Q \in S_r$, connections $C(u,Q)$ and $C(Q,u)$, where the connection lengths reflect distances in $G$ (not in $G_r$). To stress that the connection lengths are in $G$, we call such connections {\em global} connections.
If $r=r_u=r_w$, then $r$ is a leaf of $\mathcal T$, so the size of $G_r$ is constant.  All such distances (in G) between pairs of vertices in $G_r$ are stored explicitly by the oracle.

The distance oracle keeps the following items for every internal node $r \in \mathcal T$ and for
every vertex $u \in G_r$:
\begin{enumerate}
	\item global $\eps/2$ connections $C(u,Q)$ for all $Q \in S_r$.
	\item global $\eps/2$ connections $C(Q,u)$ for all $Q \in S_r$.
\end{enumerate}
These connections, over all $u \in G_r$ and all paths in $S_r$ are called the 
connections of $r$.
In addition, the data structure stores:
\begin{enumerate}
\setcounter{enumi}{2}
	\item A mapping of each vertex $v\in V(G)$ to 
          some leaf node $r_v \in \mathcal
	T$ s.t. $v \in G_{r_v}$.
	\item A lowest common ancestor data structure over $\mathcal T$.
	\item $\eps\alpha$-additive approximation of $\delta_G(u,v)$ for all $u, v \in V(G)$ 
	such that $r_u = r_v$.
\end{enumerate}

We next describe how a query is performed.
Given a $u$-to-$w$ distance query, let $r$ be the lowest common ancestor of
$r_u$ and $r_w$ in $\mathcal T$. 
If $r=r_u=r_w$, the query algorithm returns their length approximation (item 5 above) in $O(1)$ time.
Otherwise, the query algorithm computes, 
for each path $Q \in S_r$ the length of a 
shortest $u$-to-$w$ path that intersects
$Q$ using the $\eps/2$-covering sets $C(u,Q)$ and $C(Q,w)$. 
By the construction of $\mathcal T$, 
the number of such paths $Q$ is constant. 
If $\delta(u,v)\leq\alpha$, one can compute the distance
estimate within $\eps\alpha$ additive error (see \autoref{lem:conn_approx}) for each $Q$ in $O(\eps^{-1})$ time (see also \cite[Lemma 3.6]{Thorup}). 
Thus, if $\delta(u,v)\leq\alpha$, an additive $\eps\alpha$ distance approximation is
produced in $O(\eps^{-1})$ time.

We now bound the total space required for the oracle.
Since the height of $\mathcal T$ is $O(\lg n)$, and since each original face belongs to exactly one subgraph at each level of $\mathcal T$,  
each original face belongs to $G_r$ for $O(\lg n)$ nodes $r$ of
$\mathcal T$. For each of the $O(1)$ shortest paths in $S_r$ of each such node $r$,
and each original face $f$ (of $G$) in $G_r$, 
each of the 3 vertices of $f$ has
a set of $O(\eps^{-1})$ connections. This gives a total of $O(\eps^{-1}n\lg{n})$ connections in items 1,2.
Items 3,4 require $O(n)$ space.
Since there are $O(n)$ leaves, each with a subgraph of constant size, storing the additional length approximations in item 5 also requires $O(n)$ space.
Hence the total space required by the oracle is  $O(\eps^{-1}n\lg{n})$.

The simplifications in our presentation 
of Thorup's oracle compared to the original description in~\cite{Thorup}
stem from the fact that all fundamental cycle separators 
used throughout our recursive decomposition are obtained via subtrees 
of a single spanning tree $T^*$ of $G^*$. 
Since $G$ is assumed to be triangulated, $T^*$ and 
all its subtrees have maximum degree 3, 
so there is no need to retriangulate the subgraphs 
$G_r$ along the recursive decomposition. 
This leads to a significant simplification because 
it implies that all cycle separators used in our constructions 
consist of original arcs of $G$. 
Therefore, by \autoref{prop:Jordan}, it suffices to check connections 
only on the separator of the LCA of $r_u$ and $r_v$ in order 
to approximate the distance from $u$ to $v$. 
In Thorup's construction a cycle separator $Sep_r$ 
computed for some $G_r$ may consist of artificial edges 
introduced to triangulate $G_r$. 
In this situation, even though $Sep_r$ separates $u$ from $v$ in $G_r$, 
there may exist in $G$ $u$-to-$v$ paths that do not intersect $Sep_r$. 
Therefore, in order to approximate the distance from $u$ to $v$ 
one needs to check connections on all the separators 
of all the ancestors of the LCA of $r_u$ and $r_v$. 
Thorup deals with this by defining the concept of the {\em frame} of $G_r$. 
The use of frames leads to additional complications since one needs 
to make sure that the number of paths in each frame is constant. 
Our construction avoids the need for frames and its associated complications.

\subsection{Efficient construction}\label{subsec:thorup_efficient}
We now mention some, but not all the details of Thorup's
$O(\eps^{-2}n \lg^3 n)$-time construction algorithm. 
The computation of the connections and connection lengths is done
top-down the decomposition tree $\mathcal T$.
Thorup~\cite[Lemma 3.11, 3.12]{Thorup}
describes a divide-and-conquer procedure
that constructs the connections $C(u,Q)$ for all vertices
$u$ in a graph $H$, and a single shortest path $Q$. A symmetric
procedure computes $C(Q,u)$.
We summarize the procedure in the following lemma.

Let $H$ be a graph. Let $Q$ be a shortest path in $H$. 
Let $sssp(Q,H)$ be a function s.t. for any subgraph $H_{0}$
of $H$, and any vertex $q \in Q_0$, where $Q_{0}$ is the reduction of
$Q$ to $V(H_{0})$, we can compute single source shortest paths
from $q$ in the graph $Q_0 \cup H_0$ in $O(sssp(Q,H)|E(H_0)|)$ time.
It is easy to see that a standard implementation of Dijkstra's
algorithm with priority queue implies $sssp(Q,H) = O(\lg{|E(H)|})$.
If $H$ is planar, then $sssp(Q,H)=O(1)$ by the algorithm of Henzinger et
al. \cite{Henzinger}. 
\begin{lemma}
	\cite[Lemmas 3.11,3.12]{Thorup}
	\label{lem:small_ordered_cover}
	Let $Q$ be a shortest path of length at most $\alpha$ in a
        directed graph $H$. 
        Connections $C(u,Q)$ and $C(Q,u)$ for 
        all vertices $u \in V(H)$, where the connection lengths correspond to distances in $H$, can be computed in 
	$O(\eps^{-1}sssp(Q,H)|E(H)|\lg|V(Q)|)$ total time.
\end{lemma}

Using \autoref{lem:small_ordered_cover} on (planar) $G_r$ 
for all internal $r \in \mathcal T$ and all $Q \in S_r$ takes total 
$O(\eps^{-1}|G|\lg^3|G|)$ time 
(by a similar analysis to the one used to bound the space, see also \autoref{lem:construction_runtime}), but does not generate global connections. Because the lemma is applied only to $G_r$, the connection lengths are with respect to $G_r$, not to $G$. We call  such connections {\em local} connections (of $r$).
For the sake of 
later computations of global $\eps/2$ connections,
the algorithm first computes 
local $\eps/6$ connections for all internal $r \in \mathcal T$ and all $Q \in S_r$.

We next describe how to efficiently compute global connections. We discuss the connections of the form $C(Q,u)$. Computing $C(u,Q)$ is similar.
Recall that global connection lengths reflect distances in the
entire graph, not just in $G_r$. Applying
\autoref{lem:small_ordered_cover} on $G$ for every $r$ would take
quadratic time, which is too much.
Instead, the computation is done top-down $\mathcal T$, augmenting $G_r$ 
with the local connection lengths of ancestors of $r$ in $\mathcal T$, 
which have already been computed, and represent distances outside $G_r$. 
This is done as follows.

\begin{lemma}
	\label{lem:type_1_construction}
	Let $r \in \mathcal{T}$ and $\eps > 0$. 
	Global $\eps/2$ connections 
	for $r$ can be
	computed 
	in $O(\eps^{-2}|V(G_r)|\lg^{2}{|V(G)|})$ time
	using just $G_r$ and 
	local $\eps/6$ connections
	of all ancestors of $r$.
\end{lemma}

\begin{proof}
	Let $Q$ be a path in $S_r$.
	Let $X_r^Q$ be the graph composed of:
	\begin{itemize}
		\item The vertices of $G_r$.
		\item The arcs of $Q$.
		\item For each ancestor $r'$ of $r$ (including $r$ itself), 
		for each path $Q' \in S_{r'}$, 
		let $V(Q')^\circ$ be the vertices of 
		$Q'$ that have (local) connections to $V(G_r)$ or (local) connections from $V(Q)$:
			\begin{itemize}
				\item The vertices and arcs of ${\bar Q}'$, the reduction of
				$Q'$ to vertices of $V(Q')^\circ$.
				\item Arcs representing the
				(local) connection lengths of ${\bar Q}'$ for connections from $V(Q')^\circ$ to $V(G_r$) and from $V(Q)$ to $V(Q')^\circ$. 
			\end{itemize}
		
	\end{itemize}
	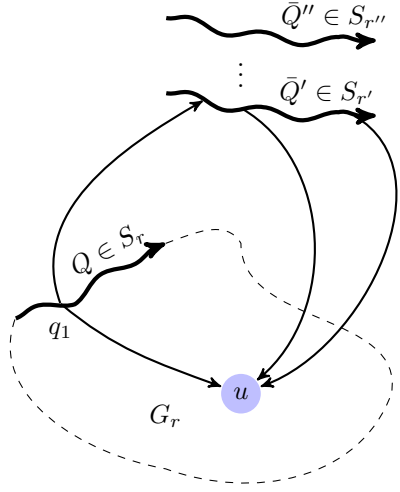
\begin{figure}
		\centering
		\makebox[.5\textwidth]{	
			\begin{tikzpicture}
			
			\node (q1) at (-3,0) {};
			\node (q11) at (-1.5,0.8) {}; 
			\node (q12) at (-2.5,0.2) [xshift=1mm,label=below:$q_1$] {}; 
			\node (q2) at (-1,1) {};
			\node (c1) at (0,1) {}; 
			\node (c2) at (2,-1) {}; 
			\node (c3) at (-1,-2) {}; 
			
			\draw [arc,ultra thick,decorate,decoration = {snake,amplitude =.7mm, segment length = 10mm}] (q1.center)--(q12.center)--(q11.center)--(q2.center) node [draw=none,midway,yshift=0.1mm,xshift=-5.1mm,rotate=30] 
			{$Q \in S_r$};
			\draw [dashed] (q2.center) to [out=20,in=120] (c1.center) to [out=-90,in=80] (c2.center) 
			to [out=-110,in=-20] (c3.center) to [out=170,in=-110] (q1.center);
			
			\node (p1) at (-1,3) {};
			\node (p10) at (-0.5,2.9) {}; 
			\node (p11) at (0,2.8) {}; 
			\node (p12) at (1.5,2.7) {}; 
			\node (p2) at (1.8,2.7) {};
			
			\draw [arc,ultra thick,decorate,decoration = {snake,amplitude =.7mm, segment length = 10mm}] (p1.center) -- (p10.center) -- (p11.center) -- (p12.center) -- (p2.center) node [draw=none,midway,above=0.0cm, xshift=-0.5cm] {\textbf{${\bar Q}'\in S_{r'}$}};

			\node (g1) at (-1,3+1.0) {};
			\node (g10) at (-0.5,2.9+1.0) {}; 
			\node (g11) at (0,2.8+1.0) {}; 
			\node (g12) at (1.5,2.7+1.0) {}; 
			\node (g2) at (1.8,2.7+1.0) {};
			
			\draw [arc,ultra thick,decorate,decoration = {snake,amplitude =.7mm, segment length = 10mm}] (g1.center) -- (g10.center) -- (g11.center) -- (g12.center) -- (g2.center) node [draw=none,midway,above=0.0cm, xshift=-0.4cm] {\textbf{${\bar Q}''\in S_{r''}$}};
			
			\node (dots) [below of=g11, yshift=5.5mm] {$\vdots$};
			
			\node[vertex] (u) at (0,-1) {$u$};
			\node (GrCirc) at (-1,-1.3) {\textbf{$G_r$}};

			\draw [arc,thick] (q12.center) to [out=-40,in=160] (u);

			\draw [arc,thick] (p11.center) to [out=-30,in=40] (u);		
			\draw [arc,thick] (p12.center) to [out=-30,in=20] (u);
			
			\draw [arc,thick] (q12.center) to [out=110,in=-150] (p10.center);		
			
			\end{tikzpicture}
		}
		\caption{
			\label{fig:XrQ}
						The figure illustrates a part of $X_r^Q$.
						The dashed circle represents $Sep_r$. 
						Only solid arcs are part of $X_r^Q$. 
						A shortest path from $q_1 \in Q$ to $u \in V(G_r)$ 
						might be enclosed in $G_r$ (local connections $C(Q, u)$)
						or approximated through a path which intersects a separator of an ancestor of $r$
						(in the figure, it is composed of arcs from $C(q_1,{\bar Q}')$, ${\bar Q}'$ and $C({\bar Q}',u)$).
		}
	\end{figure}
	
	Since each vertex of $G_r$ has $O(\eps^{-1})$ connections to each
	path in the separator of each of $r$'s $O(\lg |V(G)|)$ ancestors, the size
	of $X_r^Q$ is $O(\eps^{-1}|V(G_r)|\lg |V(G)|)$.
	
	The algorithm applies \autoref{lem:small_ordered_cover} 
	on $X_r^Q$ to generate $\eps/6$-covering connection sets 
	from $Q$ in $X_r^Q$.
	We prove below that for any path $Q\in S_r$, and any 
	$q \in V(Q)$ and $u \in V(G_r)$, 
	$\delta_{X_r^Q}(q,u)\leq \delta_{G}(q,u) + \tfrac{\eps}{3}\alpha$. (The fact that $\delta_{G}(q,u) \leq \delta_{X_r^Q}(q,u)$ is obvious.)
	Hence, the $\eps/6$ connections computed in $X_r^Q$ are in fact global $(\eps/3+\eps/6)$ connections in $G$, i.e., global $\eps/2$ connections.  
	
	
Let $q \in V(Q)$, and $u \in V(G_r)$.
	Consider a shortest $q$-to-$u$ path $P$.
	Let $r'$ be the rootmost ancestor of $r$ such that $P$
    intersects $S_{r'}$. I.e., $P$ is confined to $G_{r'}$.
    If $r'=r$ then $P$ is confined to $G_r$, and because 
    there are arcs representing local $\eps/6$ connections lengths for $C(Q,u)$,
    $\delta_{X_r^Q}(q,u)\leq \delta_{G}(q,u) + \tfrac{\eps}{6}\alpha$.
	Otherwise, $r'$ is a strict ancestor of $r$. 
	Let $Q'$ be a path of $S_{r'}$ intersected by $P$. 
	Consider the local $\eps/6$ connection lengths 
	from $q$ to $\bar{Q'}$ in $r'$ and the
        local $\eps/6$ connection lengths from
	$\bar{Q'}$ to $u$. These connection lengths were calculated with respect to exact distances in $G_{r'}$. 
	Since $\bar{Q'}$, $C(q,\bar{Q'})$ and $C(\bar{Q'},u)$ are all present in $X_r^Q$,
	it follows from \autoref{lem:conn_approx}
	that $\delta_{X_r^Q}(q,u)\leq \delta_{G_{r'}}(q,u) + 2\tfrac{\eps}{6}\alpha$.
	Since $P$ is confined to $G_{r'}$, $\delta_{G_{r'}}(q,u)=\delta_G(q,u)$,
	and so $\delta_{X_r^Q}(q,u)\leq \delta_{G}(q,u) + \tfrac{\eps}{3}\alpha$, as claimed.
%

To analyze the running time, first note that $X_r^Q$ can be easily constructed in $O(|X_r^Q|)$ time by storing, when constructing the oracle, for each vertex of $G$ its connections on each of the separators in each of the subgraphs it belongs to. The lengths of the arcs of $X_r^Q$ can be obtained within the same time bound since they are either stored as connection lengths, or correspond to distances between vertices on a shortest path comprising part of a separator. In the latter case a distance can be retrieved in constant time by storing, when constructing the oracle, for each vertex on a shortest path of a separator, its distance from the beginning of the path.

We next show that $sssp(Q, X_r^Q)=O(1)$.
That is, we show how to compute a shortest path tree in $X_r^Q$ rooted at any vertex of $Q$ in $O(|E(X_r^Q)|)$ time. 
	We first go through the vertices of $Q$ in order.
	For each vertex $u \in V(Q)$ we relax all the arcs $uv \in E(X_r^Q)$.
	Let $S$ be the set of all reduced paths of separators of strict ancestors of $r$.
	For each path ${\bar Q}' \in S$ (in any order), we go through the vertices of ${\bar Q}'$ in order. For each vertex $u \in V(\bar Q')$ we relax all the arcs $uv \in E(X_r^Q)$.	
	The computation is correct by the construction of $X_r^Q$; 	
	for any $q \in V(Q)$ and any $v \in V(X_r^Q)$, 
	every $q$-to-$v$ path starts with a subpath of $Q$, 
	followed by a subpath of at most one path ${\bar Q}' \in S$, 
	and then reaching $v$.
	Hence, the relaxation order is correct.

	$X_r^Q$ has $O(\eps^{-1}|V(G_r)|\lg{|V(G)|})$ arcs and vertices 
	because all ancestral separator paths are in reduced form.
	Hence, applying \autoref{lem:small_ordered_cover} to $X_r^Q$ for each path $Q$ of $S_r$ requires $O(\eps^{-2}|V(G_r)|\lg^{2}{|V(G)|})$ time.
	Since there are a constant number of shortest paths in
        $S_r$, this is also the total runtime.
	
\end{proof}

\begin{lemma}
	\label{lem:construction_runtime}
	
	
	Given an $\alpha$-layered graph (along with an $\alpha$-layered spanning tree),
	constructing a scale-$(\alpha,\eps)$ distance oracle takes $O(\eps^{-2}|V(G)|\lg^{3}{|V(G)|})$ time.
	
\end{lemma}

\sloppy
\begin{proof}
    Since a fundamental cycle separator can be found in linear time, constructing the decomposition tree $\mathcal T$ takes $O(|V(G)|\log|V(G)|)$ time. 
	Local connections are computed, for each $r \in \mathcal T$ and each path $Q$ of $S_r$, by applying
	\autoref{lem:small_ordered_cover}
	 to $G_r$ using
	Dijkstra's algorithm.\footnote{Since $G_r$ is planar, one may use
		\cite{Henzinger} instead to achieve $O(\eps^{-1}|E(G_r)|\lg{|V(G)|})$
		complexity. However this is not the bottleneck.} 
	This takes $O(\eps^{-1}|E(G_r)|\lg^{2}{|V(G_r)|})$ time.
	Global connections are computed by applying
	\autoref{lem:type_1_construction} to $X_r^Q$,
	which takes  $O(\eps^{-2}|V(G_r)|\lg^{2}{|V(G)|})$ time by \autoref{lem:type_1_construction}.
	
	Hence, computing all connections for a single $r \in \mathcal T$ takes
	$O(\eps^{-2}|V(G_r)|\lg^{2}{|V(G)|})$. 
	Summing over all $r \in \mathcal T$ in the same depth results in 
	$O(\eps^{-2}|V(G)|\lg^{2}{|V(G)|})$ time.
	Summing over all $O(\lg{|V(G)|})$ depths gives a
	total preprocessing time of $O(\eps^{-2}|V(G)|\lg^{3}{|V(G)|})$.
	
	Finally, computing length approximations for all pair of vertices
	$u, v \in V(G)$ where $r_u = r_v$ (item 5 in the description of the oracle) can also be done within the same
	time bound using a degenerate application of \autoref{lem:type_1_construction}.
	For each leaf node $r$ and for each $u, v \in V(G_r)$, 
	let $r'$ be the subgraph of $r$ consisting of just $u$ and $v$. Define $Sep_{r'}$ to be the singleton vertex $u$, and consider $u$ to be the a trivial local connection from $u$ to $v$  with local connection length $\l(u,v)=\delta_{G_r}(u,v)$. Considering $r'$ as a child of $r$ and applying \autoref{lem:type_1_construction} to $r'$ yields a global connection from $u$ to $v$ in $O(\eps^{-2}\lg^2{|V(G)|})$ time. That is, the length of this connection is an $\eps/2$ approximation of $\delta_G(u,v)$.
	As there are $O(|V(G)|)$ such pairs $(u,v)$ over all leaves of $\mathcal T$, the total preprocessing time remains $O(\eps^{-2}|V(G)|\lg^{3}{|V(G)|})$.
\end{proof}
\fussy

By \autoref{lem:red_scale} and \autoref{lem:construction_runtime}, we
obtain the following theorem:

\sloppy
\begin{theorem}
	\label{thm:thorup_final_result}
		One can construct a \lengthstretch{1+\eps} \spacetime{O(\eps^{-1}n\lg{n}\lg(nN))}{O(~\lglg(nN)~+~\eps^{-1})} distance oracle in $O(\eps^{-2}n\lg^{3}{n}\lg(nN))$ time.
\end{theorem}
\fussy

\section{Scale-$(\alpha,\eps)$ Vertex-Label Distance Oracle\label{sec:vl_sec}}

In this section we show how to adapt Thorup's oracle (\autoref{sec:thorup_dist}) to
the vertex-label case. 
Most of our description details how a scale-$(\alpha,\eps)$ 
vertex-label distance oracle can be constructed efficiently.
At the end of the section we explain why Thorup's argument showing
that scale-$(\alpha,\eps)$ distance oracles can be used to construct a
general distance oracle (\autoref{lem:red_scale}) applies
in the vertex-label case as well.

Thorup's oracle supports one-to-one (vertex-vertex)
distance queries, whereas here we need one-to-many distance queries.
Given two vertices $u,v$, Thorup's oracle finds the 
LCA of $r_u$ and $r_v$  in
$\mathcal T$, and uses its connections to produce the desired distance
approximation. 
In a one-to-many query, we are given the query vertex $u$, but there is no analogue for $v$.
We do not know which $\lambda$-labeled vertex in $G$ is
closest to $u$. 
The minimal distance approximation from $u$ to a $\lambda$-labeled 
vertex in $G_{r_u}$
can be computed in $O(1)$, using the vertex-vertex distance oracle data structure,
as $|G_{r_u}|=O(1)$.
Therefore, we assume, without loss of generality, 
that a shortest $u$-to-$\lambda$ path intersects some path of $S_{parent(r_u)}$.
More precisely, we assume that a shortest $u$-to-$\lambda$ path 
intersects the separator of the leafmost (i.e., furthest from the root) strict ancestor $r$ of $r_u$
in $\mathcal T$ such that $G_r$ contains
some $\lambda$-labeled vertex. The node $r$ takes the role
of the LCA of $r_u$ and $r_v$ in the vertex-vertex query algorithm. In order to be able to use $r$'s connections in a distance query, one must make
sure that $r$'s connections represent approximate distances to
$\lambda$-labeled vertices in the entire graph, not just in $G_r$.


We define a set $\mathcal L$ of new (artificial) vertices, one per label.
For every $r\in \mathcal T$, 
let $\mathcal{L}_r=\set{\lambda\in \mathcal L | V(G_r) \cap V_\lambda \neq \emptyset}$ be the restriction of
$\mathcal L$ to labels present in $G_r$.
Let $\colored{\hat G_r}$ be the graph with vertex set
$V(\colored{\hat G_r})=V(G_r)\cup \mathcal{L}_r$ whose
arcs are the arcs of $G_r$ along with a zero-length arc from each
$\lambda$-labeled vertex of $G_r$ to the corresponding vertex
in $\mathcal L_r$. Note that the number of vertices and arcs in
$\colored{\hat G_r}$ is within a constant factor of those of $G_r$.

In addition to the information stored in the vertex-vertex case (Lemma~\ref{lem:construction_runtime}), 
for every $r~\in~\mathcal T$ and $\lambda~\in~\mathcal L_r$, the
oracle stores connections w.r.t. $\lambda$.
For every shortest path $Q \in S_r$ the oracle stores global $\eps/2$ connections $C(Q,\lambda)$.
Before explaining how to compute these connections we discuss how a distance query
is performed.

Obtaining the distance from $u$ to $\lambda$ is done by
finding the leafmost strict ancestor $r$ of $r_u$ with $\lambda \in \mathcal
L_r$.
The algorithm estimates, for each 
$Q \in S_r$, the length of a shortest $u$-to-$\lambda$ path that intersects
$Q$, using the connections $C(u,Q)$ and $C(Q,\lambda)$ stored for $r$ (Since $\lambda\in \mathcal{L}_r$, $r$ does store $Q$-to-$\lambda$ connections).

Finding $r$ can be done by binary search on the path from $r_u$ to the root
of $\mathcal T$. The number of steps of the binary search is
$O(\lglg n)$. Finding whether a node $r'$ has a vertex with label
$\lambda$ can be done, e.g., by storing all unique labels in $G_{r'}$ in a deterministic dictionary~\cite{Hagerup,Ruzic08}. This dictionary works in the word-RAM model. It stores $k$ elements in $O(k)$ space, requires $O(k \log k)$ construction time, and answers whether an element $x$ is in the dictionary in constant time. Over all nodes of $\mathcal T$ this takes $O(n \log n)$ space, and $O(n \log n \log |L|) = O(n \log^2 n)$ preprocessing time, which are dominated by the overall space and construction time of the oracle. The constant query time of the deterministic dictionary allows us to perform the binary search for $r'$ in $O(\lglg n)$ time. Thus, the query time of our oracle is $O(\lglg n + \eps^{-1})$.

\paragraph{The Construction Algorithm}

It remains to show how the connections are computed.
We begin with the local connections.
For every $r\in \mathcal T$, for every $Q \in S_r$, 
the algorithm computes $\eps/6$ connections sets on
$Q$ w.r.t. each vertex of $\colored{\hat G_r}$
by invoking \autoref{lem:small_ordered_cover} on $\colored{\hat G_r}$. 

As for the quality of approximation, we must show that the
connection lengths to the artificial vertices are useful for approximate distance queries.
For the local connections, the approximation is
immediate because the desired distances are in ${\colored{\hat G_{r}}}$.

We now show how to compute the global connections without invoking
\autoref{lem:small_ordered_cover} on the entire input
graph $G$ at every call.
The crucial point is that the
connections to the artificial vertex $\lambda$ from separators of
ancestors of $r$ represent distances to vertices with label $\lambda$
that are not necessarily in $G_r$.

\begin{lemma}
\label{lem:approx_color_dist_correctness}
Let $r \in \mathcal{T}$. Global $\eps/2$ connections of $r$ to label $\lambda \in \mathcal L_r$ can be
computed using just the local $\eps/6$ connections
of ancestors of $r$.
Computing all global connections to $\mathcal L_r$ for all 
$r \in \mathcal T$ can be done in $O(\eps^{-2}|V(G)|\lg^3{|V(G)|})$ time.
\end{lemma}

\begin{proof}
Let $Q$ be a shortest path of $S_r$.
Let $\directedXrQ$ be the graph composed of the following: (see
\autoref{fig:XrQ_directed} for an illustration)
\begin{itemize}
	\item The vertices $\mathcal L_r$.
	\item The arcs of $Q$.
	\item For each ancestor $r'$ of $r$ (including $r$), 
	for each path $Q' \in S_{r'}$, let $V(Q')^\circ$ be the vertices of 
	$Q'$ that have local connections to $\mathcal L_r$ or 
	local connections from $V(Q)$:
	\begin{itemize}
		\item The vertices and arcs of ${\bar Q}'$, the reduction of
		$Q'$ to vertices of $V(Q')^\circ$.
		\item Arcs representing the
                  local connection lengths of ${\bar Q}'$ from $V(Q')^\circ$ to $\mathcal L_r$ and from $V(Q)$ to $V(Q')^\circ$.
                  
    \end{itemize}
\end{itemize}

\begin{figure}
		
		\centering
		\begin{tikzpicture}
		
		\node (q1) at (-3,0) {};
		\node (q11) at (-1.5,0.8) {}; 
		\node (q12) at (-2.5,0.2) [xshift=1mm,label=below:$q_1$] {}; 
		\node (q2) at (-1,1) {};
		\node (c1) at (0,1) {}; 
		\node (c2) at (2,-1) {}; 
		\node (c3) at (-1,-2) {}; 
		
		\draw [ultra thick,decorate,decoration = {snake,amplitude =.7mm, segment length = 10mm}] (q1.center)--(q12.center)--(q11.center)--(q2.center) node [draw=none,midway,yshift=1.1mm,xshift=-4.1mm,rotate=30]  
		{\textbf{$Q \in S_r$}};
		\draw [dashed] (q2.center) to [out=20,in=120] (c1.center) to [out=-90,in=80] (c2.center) to [out=-110,in=-20] (c3.center) to [out=170,in=-110] (q1.center);
		
		\node (p1) at (-1,3) {};
		\node (p10) at (-0.5,2.9) {}; 
		\node (p11) at (0,2.8) {}; 
		\node (p12) at (1.5,2.7) {}; 
		\node (p2) at (1.8,2.7) {};
		
		\draw [ultra thick,decorate,decoration = {snake,amplitude =.7mm, segment length = 10mm}] (p1.center) -- (p10.center) -- (p11.center) -- (p12.center) -- (p2.center) node [draw=none,midway,above=0.2cm,xshift=-5mm, yshift=-0.2cm] {\textbf{${\bar Q}'\in S_{r'}$}};
		
		\node (g1) at (-1,3+1.0) {};
		\node (g10) at (-0.5,2.9+1.0) {}; 
		\node (g11) at (0,2.8+1.0) {}; 
		\node (g12) at (1.5,2.7+1.0) {}; 
		\node (g2) at (1.8,2.7+1.0) {};
		
		\draw [ultra thick,decorate,decoration = {snake,amplitude =.7mm, segment length = 10mm}] (g1.center) -- (g10.center) -- (g11.center) -- (g12.center) -- (g2.center) node [draw=none,midway,above=0.2cm,xshift=-4mm, yshift=-0.2cm] {\textbf{${\bar Q}''\in S_{r''}$}};
		
		\node (dots) [below of=g11, yshift=5.5mm] {$\vdots$};
		
		\node[vertex] (u1) at (-0.5,-1.5) {$u_1$};
		\node[vertex] (u2) at (1,-1) {$u_2$};
		\node (GrCirc) at (-2,-1.1) {\textbf{$G_r$}};
		\node[vertex] (label) at (3,-2) {\textbf{$\lambda$}};

		\draw [arc,thick] (q12.center) to [out=-40,in=170] (label);

		\draw [arc,thick] (q12.center) to [out=110,in=-150] (p10.center);		
		
		\draw [arc,thick] (q12.center) to [out=110,in=-150] (g11.center);		
		
		\draw [arc,thick] (g12.center) to [out=-20,in=60] (label);
		
		\draw [arc,thick] (p12.center) to [out=-50,in=80] (label);
		\draw [arc,thick] (p10.center) to [out=-70,in=98] (label);

		\end{tikzpicture}
	\caption{
		\label{fig:XrQ_directed}
			The figure illustrates a part of $X_r^Q$
			for the labels case, similarly to \autoref{fig:XrQ}.
			The vertices $u_1$ and $u_2$ are $\lambda$-labeled
			vertices of $G_r$, and are not part of $X_r^Q$.
			Paths from $Q$ to $\lambda$-labeled
			vertices such as $u_1$ and $u_2$ confined to $G_r$ are
			represented in $X_r^Q$ by arcs between $Q$ and $\lambda$. 
			These arcs correspond to the local
			$\eps/6$ connections of $\lambda$ on $Q$ in ${\colored{\hat G_{r}}}$.
			All solid arcs are part of $X_r^Q$. 
			A shortest path from $q_1 \in V(Q)$ to $\lambda \in \mathcal{L}_r$ is
			approximated by connections from $q_1$ to a separator 
			of an ancestor of $r$ and from there to $\lambda$.
			Note that $C(Q',\lambda)$ represent distances from $Q'$ to 
			$\lambda$-labeled vertices that are not necessarily
			in $G_r$.
	}
	
\end{figure}
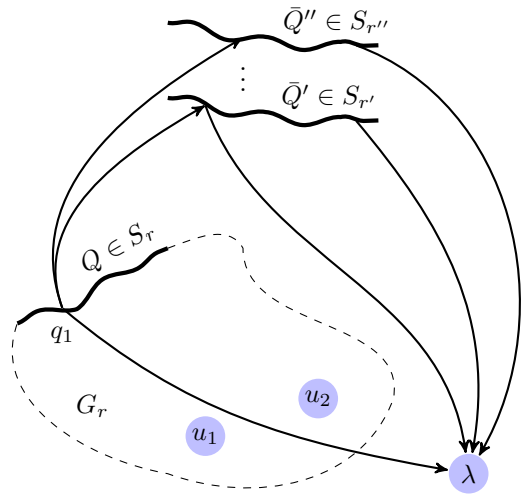

Let $q \in V(Q)$ and $\lambda \in \mathcal{L}_r$.
We show that $\directedXrQ$
approximates the distance from $q$ to its closest
$\lambda$-labeled vertex in $G$ with an additive error of $\tfrac{\eps}{3}\alpha$, namely:
\begin{equation}
\label{eq:construction_approx_colored}
\delta_{\directedXrQ}(q,\lambda)\leq \delta_{G}(q,\lambda) + \tfrac{\eps}{3}\alpha
\end{equation}

Therefore, applying \autoref{lem:small_ordered_cover} to $\directedXrQ$ with $\eps/6$
results in global $\eps/2$ connections, just as in \autoref{lem:type_1_construction}.

Let $u_\lambda$ be a closest $\lambda$-labeled vertex to $q$ in $G$.
Consider a shortest $q$-to-$u_\lambda$ path $P$ in $G$.
Let $r'$ be the rootmost ancestor of $r$ such that $P$ intersects $S_{r'}$.
As in the vertex-vertex case, if $r'=r$ then we are done, 
as the local connection sets of $Q$ represent approximations of distances
from vertices of $Q$ to vertices $\mathcal{L}_r$ in $G$.
Otherwise, $r' \neq r$ and
let $\tilde Q$ be a 
shortest path in $S_{r'}$ that is
intersected by $P$.
Consider the local $\eps/6$ connection lengths from $q$ to $\tilde Q$ and from $\tilde Q$ to $\lambda$ in $r'$.
These lengths were calculated with respect to exact distances in $\colored{\hat G_{r'}}$.
Since $\tilde Q$, $C(q,\tilde{Q})$ and $C(\tilde{Q},\lambda)$ are all present in $\directedXrQ$,
it follows that
$\delta_{\directedXrQ}(q,\lambda) \underset{\autoref{lem:conn_approx}}{\leq} \delta_{\colored{\hat G_{r'}}}(q,\lambda) + 2\tfrac{\eps}{6}\alpha$.
Since $P$ is confined to $G_{r'}$, $\delta_{\colored{\hat G_{r'}}}(q,\lambda)=\delta_{G_{r'}}(q,u_\lambda)=\delta_G(q,\lambda)$,
and the lemma follows.

\end{proof}

The construction algorithm differs from the one in
Section~\ref{subsec:thorup_final_construction} in the existence of the
artificial vertices. Since $|\mathcal L_r| = O(|V(G_r)|)$ 
for any $r \in \mathcal T$, the
total running time and space requirements remain as in Section~\ref{subsec:thorup_final_construction}.

\subsection{Vertex-Label Distance Oracle}\label{sec:VLOracle}
We now explain why \autoref{lem:red_scale} applies
also to the vertex-label case, 
using our construction algorithm for scale-$(\alpha,\eps)$
vertex-label distance oracles.
For this we need to elaborate a bit more on the proof of
\autoref{lem:red_scale}.

The proof of \autoref{lem:red_scale} relies on two reductions~\cite[Lemma 3.2, 3.9]{Thorup}. The
first shows that for any graph $G$, and for any $\alpha >0$, one can construct in linear time a family of
$\alpha$-layered graphs $\{G_i^\alpha\}_i$ such that \begin{enumerate}
\item $\sum |G^{\alpha}_{i}| = O(|G|)$.
\item Each $v\in V(G)$ has an 
index $\iota(v)$ s.t. any $w\in V(G)$ has
$d=\delta_{G}(v,w)\leq \alpha$ iff $d= \min\{
\delta_{G^{\alpha}_{\iota(v)-2}}(v,w) ,
\delta_{G^{\alpha}_{\iota(v)-1}}(v,w), \delta_{G^{\alpha}_{\iota(v)}}(v,w)\}$
\item Each $G^{\alpha}_{i}$ is a minor of $G$. I.e.,  it can be
  obtained from $G$  by contraction and deletion of arcs and
  vertices. In particular, if  $G$ is planar, so is  $G^{\alpha}_{i}$.
\end{enumerate}
Item (2.) means that any shortest path of length at most
$\alpha$ in $G$ is represented in at least one of three fixed graphs
$G_i^\alpha$. Thus,  one
can use scale-($\alpha,\eps$) distance oracles for the
$\alpha$-layered graphs
$\{G_i^\alpha\}$ to implement a scale-($\alpha,\eps$) oracle of $G$. 

The second reduction~\cite[Lemma 3.8]{Thorup} is a scaling argument that shows how to construct
a $(1+\eps)$-stretch distance oracle for $G$ using
scale-$(\alpha,\eps')$ distance oracles
 for $\alpha = 2^{i}$ for all integers $i\in [\lceil\lg(nN)\rceil]$. The
 reduction does not rely on planarity.
Roughly, the idea is that an additive $\frac{\eps}{2} \alpha$-approximation to $\delta(u,v)$ is a multiplicative $(1+\eps)$-approximation if $\frac{\alpha}{2} \leq \delta(u,v)$.
Therefore one can use binary search to identify the appropriate scale from which to get a $(1+\eps)$ approximate distance. See~\cite[Lemma 3.9]{Thorup} for the details.

Now consider the vertex-labeled case.
Let $G'$ be the graph obtained from $G$ by adding apices representing
the labels. All vertices with a specific label are connected in $G'$ to the apex corresponding to this label with zero length arcs. 
A vertex-vertex distance oracle for $G'$ is a vertex-label
distance oracle for $G$, so it suffices to show the former.
Applying Thorup's
second reduction to $G'$, it suffices to show how to construct a
scale-($\alpha,\eps$) distance oracle for $G'$ for any $\alpha,
\eps$. Answering distance queries between vertices of $G'$ that are not apices can be done using a scale-($\alpha,\eps$) vertex-vertex distance oracle for the original planar graph $G$ whose existence was shown by Thorup. Answering distance queries to an apex in $G'$ can be done using a vertex-label scale-($\alpha,\eps$) distance oracle for $G$. It therefore suffices to show a vertex-label scale-($\alpha,\eps$) distance oracle for $G$.
Let $\alpha\in\mathbb{R}^+$.
We use Thorup's first reduction, and construct the vertex-label distance oracle described in the beginning of Section~\ref{sec:vl_sec} for each minor $G_i^\alpha$ of $G$. 
Given $u \in V(G)$ and $\lambda \in \mathcal{L}$
with $\delta_G(u,\lambda)\leq \alpha$,
let $w \in V_\lambda$ be a closest $\lambda$-labeled vertex to $u$.
By the properties of Thorup's first reduction, there is a graph $G_i^\alpha$
in which the $u$-to-$w$ distance is $\delta_G(u,w)$.
Thus, the vertex-label distance oracle for $G_i^\alpha$ will report
a distance of at most $\delta_G(u,\lambda) + \eps\alpha$.

Therefore,  the main theorem follows:
\begin{theorem}
	\label{thm:mod_final_result}
	A \lengthstretch{1+\eps} \spacetime{O(\eps^{-1}n\lg{n}\lg(nN))}{O(\lglg{n}\lglg(nN)+\eps^{-1})}
	vertex-label distance oracle can be constructed
	in $O(\eps^{-2}n\lg^{3}{n}\lg(nN))$ time 
		for a directed planar graph with
	$n$ vertices and maximum arc length $N$.
\end{theorem}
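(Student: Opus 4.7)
The plan is to invoke Thorup's reduction (\Cref{lem:red_scale}) that converts the construction of a $(1+\eps)$-stretch distance oracle on an arbitrary directed planar graph into the construction of a scale-$(\alpha,\eps)$ distance oracle on each $(3,\alpha)$-layered subgraph of a linear-size decomposition of the input. Since this reduction is a black box from~\cite{Thorup} that multiplies space and construction time by $O(\lg(nN))$ and query time by $O(\lglg(nN))$, it suffices to build, for each layered subgraph, a scale-$(\alpha,\eps)$ vertex-label oracle whose space, query, and construction bounds match those of the undirected case in \Cref{sec:undirected_case} ($O(\eps^{-1}n\lg n)$, $O(\lglg n+\eps^{-1})$, and $O(\eps^{-2}n\lg^{3}n)$ respectively). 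With $t(\eps')=O(\lglg n+\eps'^{-1})$ the query bound becomes $O(t(\tfrac14)\lglg(nN)+t(\tfrac{\eps}{4}))=O(\lglg n\cdot\lglg(nN)+\eps^{-1})$, as claimed.

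The scale oracle follows the same blueprint as the undirected labeled oracle, but uses the directed variant of the connection lemma (\Cref{lem:conn_approx_directed}), whose guarantee is additive $\eps\alpha$ rather than multiplicative. First I would build the recursive shortest-path-separator decomposition $\mathcal{T}$ of the layered graph exactly as in the unlabeled directed oracle of~\cite{Thorup}, storing at every node $r$ the type-0 and type-1 connections for each vertex of $G_r^\circ$ as in that oracle (recalling that the directed case requires separate into-$Q$ and out-of-$Q$ connections for type-0 on $S_r$, and $Q$-to-$u$ connections for type-1 on $F_r$). Then, for each $r$ and each label $\lambda\in \mathcal{L}_r$, I would augment the bookkeeping with connection sets $C(\lambda,Q)$ for the paths $Q\in S_r\cup F_r$, computed on the graph $\hat G_r^\circ$ obtained from $G_r^\circ$ by adding an artificial sink vertex $\lambda$ with zero-length arcs \emph{into} $\lambda$ from every $\lambda$-labeled vertex of $G_r^\circ$. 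Because the artificial vertex has only incoming arcs, no teleportation between same-labeled vertices can occur, so the vertex-splitting trick used in the undirected case is unnecessary here---$\hat G_r^\circ$ already correctly represents minimum $\cdot$-to-$\lambda$ distances. Type-0 connections on $F_r$ are then extracted via a directed analogue of the extended thinning lemma (\Cref{lem:thinning_extended_undirected}) applied to the union of type-0 connections of $Q$ with respect to individual $\lambda$-labeled vertices, reducing their total size from $O(\eps^{-1}n_\lambda)$ to $O(\eps^{-1})$ in linear time. Type-1 connections of $\lambda$ are computed on the auxiliary graph $X_r^Q$ of \Cref{fig:XrQ_directed}, which assembles $\bar Q$, the reduced separators of strict ancestors of $r$, and incoming-arc connections from those separators to $\lambda$ whose lengths are the already-computed type-0 connection lengths at the ancestors.

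For a $u$-to-$\lambda$ query I would locate in $O(\lglg n)$ time the lowest ancestor $r$ of $r_u$ in $\mathcal{T}$ with $\lambda\in \mathcal{L}_r$, by binary search along the $r_u$-to-root path, using perfect hashing in each $G_{r'}^\circ$ to test label membership in $O(1)$ time. A shortest $u$-to-$\lambda$ path in $G$ must intersect $S_r$ and perhaps $F_r$; for each of the $O(1)$ paths $Q$ in $S_r\cup F_r$ I combine $C(u,Q)$ (of the appropriate type and direction) with $C(\lambda,Q)$ through the reduced path $\bar Q$ in $O(\eps^{-1})$ time, and return the minimum. By \Cref{lem:conn_approx_directed} this minimum is at most $\delta_G(u,\lambda)+\eps\alpha$, which after the scale reduction of \Cref{lem:red_scale} yields the multiplicative $(1+\eps)$-stretch.

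The step I expect to require the most care is the correctness of the type-1 construction via $X_r^Q$ in the additive-error setting. For any $q\in\bar Q$ and $\lambda\in\mathcal{L}_r$, I must show that a shortest $q$-to-$\lambda$ path $P$ in $G$ is approximated within additive $O(\eps\alpha)$ by some $q$-to-$\lambda$ path in $X_r^Q$: letting $r'$ be the rootmost strict ancestor of $r$ whose separator $S_{r'}$ is crossed by $P$, the path $P$ is confined to $G_{r'}^\circ$, so the composition of the type-0 $q$-to-$S_{r'}$ connection (already stored at $r'$) with the type-0 $S_{r'}$-to-$\lambda$ connection (also already stored at $r'$) reproduces $P$ in $X_r^Q$ within two applications of \Cref{lem:conn_approx_directed}, losing at most $O(\eps\alpha)$ additive error; rescaling $\eps$ by a constant restores the desired guarantee. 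The remaining space, construction, and query-time accounting is a direct transcription of the undirected analysis inflated by the $\lg(nN)$ and $\lglg(nN)$ factors of \Cref{lem:red_scale}, which gives the claimed bounds.
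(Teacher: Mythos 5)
Your proposal is correct and follows essentially the same route as the paper's Appendix~\ref{app:directed_case}: Thorup's scale reduction (\autoref{lem:red_scale}), the additive-error directed connection lemma (\autoref{lem:conn_approx_directed}), the augmented graph $\hat G_r^\circ$ with a sink apex per label having only incoming arcs (so no teleportation and no vertex splitting), and the directed auxiliary graph $X_r^Q$ for type-1 connections. The only cosmetic difference is that you merge per-vertex covering sets via an extended-thinning step for the type-0 label connections, whereas the paper computes $C(\lambda,Q)$ directly in $\hat G_r^\circ$ (which is legitimate precisely because the sink apex creates no shortcuts); both routes yield the same bounds.
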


\section{Reporting Approximate Shortest Path}
\label{sec:report}

Thorup describes how to augment his oracle to report  a $u$-to-$v$
path of length $(1+\epsilon)\delta(u,v)$
in time linear in the number of arcs reported.
We provide here a brief description (cf. \cite[Sections 2.7, 2.8, 3.7]{Thorup}).

The algorithm now stores additional information. The $\eps$-cover
construction algorithm (\autoref{lem:small_ordered_cover}) computes shortest path trees
rooted at each connection. In the original description these
trees are discarded once the connection lengths have been recorded. To
report shortest paths, the algorithm stores these trees for all local
connections.
Let $r$ be a node of $\mathcal{T}$. 
For each $v \in G_r$, $Q \in S_r$ and global connection $q \in C(Q,v)$, (i.e., connections computed by invoking the $\eps$-cover
construction algorithm on $X_r^Q$), the algorithm records the rootmost
node $r' \in \mathcal T$ whose separator is intersected 
by the $q$-to-$v$ shortest path in $X_r^Q$. 

The query algorithm for the distance between $u$ and $v$ uses some
global connection. Let $r'$ be the node of $\mathcal T$ 
recorded for that connection. By choice of $r'$ there exists a
$(1+\eps)$-approximate shortest path $P$ between $u$ and $v$ in
$G_{r'}$. The algorithm now uses the local
connections of $r'$ to find $P$ and uses the shortest path trees
stored for those connections to report the edges of $P$.

Storing all shortest path trees does not change the preprocessing time 
but increases the required space to $O(\eps^{-1}n\lg^2{n})$.
This is because each vertex participates in $O(\eps^{-1}\lg{n})$ 
	connections shortest path trees (see \cite[Lemma~3.12]{Thorup}) in $O(\lg{n})$ nodes of $\mathcal{T}$.
Let $\bar{d}$ be the number of arcs of the reported path.
The resulting query time is $O(\eps^{-1} + \lg\lg{n} + \bar{d})$.

We extend this technique to  the vertex label case. 
As in the vertex-vertex case, the query algorithm finds 
a node $r'$ such that there exists in $G_{r'}$ a path $P$ that 
$(1+\eps)$-approximates the shortest path between $u$ and a
$\lambda$-labeled vertex in $G$. It then finds $P$ using the local
connections of $r'$.
The main difference is that at this point the algorithm knows the
distance to the artificial vertex $\lambda$, but not
the identity of the $\lambda$-labeled vertex realizing this distance. However, this
information can be stored along with the shortest path trees for
local connections.
Consider any $r$. Recall that each local connection corresponds to a
shortest path in  $\colored{\hat
  G_r}$. We record, with every local connection to each $\lambda \in \mathcal L_r$,  the identity of the
vertex preceding $\lambda$ in the corresponding shortest path. 

\section{Undirected Vertex-Label Distance
  Oracle\label{sec:undirected}}
The extended abstract of the current paper~\cite{WAOA} contained a
description of a simplified and more efficient version of a
vertex-label distance oracle for undirected planar graphs.
Unfortunately, that oracle is flawed. Specifically, Lemma 6
in~\cite{WAOA} is false. Let $Q$ be a shortest path in an undirected
graph $G$. If one introduces  an artificial vertex $u$ (apex) to $G$
and connects $u$ to vertices of $G$
with zero-length edges, then $Q$ might no longer be a shortest
path. The proof of Lemma 6 in~\cite{WAOA} assumes $Q$ remains a
shortest path, and is therefore incorrect.
This seems to be a fundamental problem with this
approach, which we were not able to correct. This situation is not
problematic in the directed case, because all the arcs that are added
to the graph enter the artificially added vertices. Therefore, adding
these arcs does not change distances between vertices of the graph in
the directed case.

We do mention an improvement to the vertex-label distance oracle
of \cite{LMN} for undirected planar graphs. 
Given a label $\lambda$ and vertex $u$, for any path $Q$, 
when queried for the shortest $u$-to-$\lambda$ path which intersects
$Q$, the algorithm in
\cite{LMN} uses a predecessor search in order to identify a range of $\bigcup_{v\in V_{\lambda}}\set{C(Q,v)}$ which is relevant for the query.
In~\cite{LMN} searching for the predecessor is done using
binary search on the vertices of $Q$, which takes $O(\lg{\Delta})$, 
where $\Delta$ is the (hop) diameter of the graph. In general $\Delta$ may be as large as $n$.
Instead, one can record, for each vertex of $Q$, its ordinal number
along $Q$, and store the vertices of $\bigcup_{v\in
  V_{\lambda}}\set{C(Q,v)}$ in a fast integer predecessor data
structure (e.g.~\cite{Willard83}), which supports predecessor search
in $O(\lg\lg{\Delta})$ instead of $O(\lg\Delta)$. Therefore, the query
time for the vertex-label distance oracle of~\cite{LMN} can be made
$O(\eps^{-1} \lg n\lg\lg n)$ instead of $O(\eps^{-1} \lg^2 n)$ .

\section{Concluding Remarks}
This work presents an extension of Thorup's vertex-vertex oracle
to enable vertex-label queries in planar graphs.
Although our focus is on planar graphs, the algorithm works for any
class of graphs which are both minor-closed and tree-path
separable. By tree-path separable we mean that, given any spanning
tree, there exists a constant number of root branches whose deletion
separates the graph into subgraphs, each of size at most half the original size.

\paragraph{Related problems}
In this work we dealt with labeled-graphs where each
vertex has exactly one label. However, there is no obstacle
to deal with labeled-graphs in which each vertex has several labels.
Let $\kappa$ be the bound on the maximal number of labels a vertex is labeled by.
Assuming $\kappa = poly(n)$, the preprocessing time and space of the oracles in this work are multiplied by $\kappa$.
 
We note that the oracle in this paper can be used to obtain
vertex-vertex distance oracles in an apex graph $G$  (i.e., $G$ is
a planar graph with additional $\kappa$ apices).
Split each apex $a$ in $G$ into degree($a$) copies, one copy for each arc
incident to $a$. Label the copies of $a$ with a distinct label
$\lambda_a$. Note that the resulting planar graph $G'$ has the same number of
edges as $G$, and at most $\kappa$ times the number of vertices of $G$.
Construct a vertex-vertex distance oracle and a vertex-label distance oracle for $G'$. Also store
explicitly the distances between any two apices in $\kappa^2$ space. 
Consider a $u$-to-$v$ query.  Let $P$ be a shortest $u$-to-$v$ path in
$G$. If $P$ does not visit any apex, its length can be reported by the vertex-vertex distance oracle
for $G'$. If $P$ does visit some apex, then let $a_1,a_2$
be the first and last apices on $P$, respectively. The path $P$ can be
decomposed into ($i$) a  $u$-to-$a_1$  prefix, ($ii$) a $a_1$-to-$a_2$
infix, and ($iii$) a $a_2$-to-$v$ suffix. The lengths of the
$2\kappa$ possible prefixes ($i$) and suffixes ($iii$) can be found by making $2\kappa$ queries to 
the vertex-label distance oracle of $G'$. The apex-to-apex distances
($ii$) have been precomputed. Thus the query time is $O(\kappa^2 + \kappa(\lglg{n}\lglg(nN)+\eps^{-1}))$.

A possible direction for future work is to devise efficient label-to-label distance queries.
This seems  significantly more difficult. 
In the vertex-label query,
when queried for a $u$-to-$\lambda$ distance the algorithm used
the leafmost node $r$ in $\mathcal T$ that contains
$u$ and some $\lambda$-labeled vertex to quickly answer the query.
In a $\lambda_1$-to-$\lambda_2$ query, 
we do not know which node $r \in \mathcal{T}$ necessarily has its
separator intersected by the
desired shortest path.
This is because $\lambda_1$ and $\lambda_2$ labeled vertices might be scattered
in any of the leaves of $\mathcal{T}$.
Constructing  a label-to-label distance oracle whose query
 time does not depend on the number of vertices with the queried
 labels remains as an interesting open problem.
 

\bibliographystyle{spmpsci}
\bibliography{dist}



\end{document}